   \def\@citecolor{blue}%
   \def\@linkcolor{blue}%
\def\orcidID#1{\smash{\href{http://orcid.org/#1}{\protect\raisebox{-1.25pt}{\protect\includegraphics{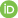}}}}}
\newcommand{\cmark}{{\color{green!70!black}\text{\ding{51}}}}%
\newcommand{\xmark}{{\color{red!70!black}\text{\ding{55}}}}%
\newcommand{\checkmarkt}[1]{%
	\edef\TVALUE{{#1}}%
	\expandafter\ifstrequal\TVALUE{yes}{\cmark}{}%
	\expandafter\ifstrequal\TVALUE{no}{\xmark}{}%
			\expandafter\ifstrequal\TVALUE{no*}{\xmark $^\star$}{}%
}
\spnewtheorem{cremark}{Remark}{\bfseries}{\itshape}
\spnewtheorem{cnotation}[definition]{Notation}{\bfseries}{\itshape}
\Crefname{section}{Section}{Sections}
\crefname{corollary}{\text{Corollary}}{\text{Corollaries}}
\Crefname{corollary}{\text{Corollary}}{\text{Corollaries}}
\crefname{lemma}{\text{Lemma}}{\text{Lemmas}}
\Crefname{lemma}{\text{Lemma}}{\text{Lemmas}}
\crefname{proposition}{\text{Prop.}}{\text{Propositions}}
\Crefname{proposition}{\text{Proposition}}{\text{Propositions}}
\crefname{definition}{\text{Def.}}{\text{Definitions}}
\Crefname{definition}{\text{Definition}}{\text{Definitions}}
\crefname{notation}{\text{Notation}}{\text{Notations}}
\Crefname{notation}{\text{Notation}}{\text{Notations}}
\crefname{theorem}{\text{Thm.}}{\text{Theorems}}
\Crefname{theorem}{\text{Theorem}}{\text{Theorems}}
\crefname{figure}{\text{Fig.}}{\text{Figures}}
\Crefname{figure}{\text{Figure}}{\text{Figures}}
\crefname{example}{\text{Ex.}}{\text{Examples}}
\Crefname{example}{\text{Example}}{\text{Examples}}
\newcommand{\noop}[1]{}
\newcommand{\textcode}[1]{\texorpdfstring{\ensuremath{\mathtt{#1}}}{#1}}
\newcommand{\kw}[1]{\textbf{\textcode{#1}}}
\newcommand{\skipc}{\kw{skip}}
\newcommand{\ite}[3]{\kw{if}\;#1\:\kw{then}\;#2\; \kw{else}\;#3}
\newcommand{\while}[2]{\kw{while}\;#1\;\kw{do}\;#2}
\newcommand{\sep}{\;\kw{;}\;}
\newcommand{\spar}[3]{{\phantom{|}}^{#1}\!\kw{|}{#2}\kw{|}^{#3}\;}
\newcommand{\ALT}{\;\;|\;\;}
\newcommand{\ie}{{i.e.,} }
\newcommand{\eg}{{e.g.,} }
\newcommand{\etal}{{et~al.}}
\newcommand{\wrt}{w.r.t.~}
\newcommand{\aka}{a.k.a.~}
\newcommand{\cR}{\mathcal{R}} 
\newcommand{\cG}{\mathcal{G}}
\newcommand{\statesetR}{R}
\newcommand{\inarrT}[1]{\begin{array}[t]{@{}l@{}}#1\end{array}}
\newcommand{\inpar}[1]{\left(\begin{array}{@{}l@{}}#1\end{array}\right)}
\newcommand{\inset}[1]{\left\{\begin{array}{@{}l@{}}#1\end{array}\right\}}
\newcommand{\inarr}[1]{\begin{array}{@{}l@{}}#1\end{array}}
\DeclareRobustCommand{\brkstore}{\genfrac[]{0pt}{}}
\newcommand{\set}[1]{\{{#1}\}}
\newcommand{\sem}[1]{\llbracket #1 \rrbracket}
\newcommand{\pfn}{\rightharpoonup}
\newcommand{\st}{\; | \;}
\newcommand{\dom}[1]{\textit{dom}{({#1})}}
\newcommand{\fv}{\mathit{fv}}
\newcommand{\tup}[1]{{\langle{#1}\rangle}}
\newcommand{\nin}{\not\in}
\newcommand{\suq}{\subseteq}
\newcommand{\sqsuq}{\sqsubseteq}
\newcommand{\size}[1]{|{#1}|}
\newcommand{\true}{\mathit{True}}
\newcommand{\btrue}{\mathit{true}}
\newcommand{\false}{\mathit{False}}
\newcommand{\bfalse}{\mathit{false}}
\newcommand{\epsl}{\varepsilon}
\newcommand{\maketil}[1]{{#1}\ldots{#1}}
\newcommand{\til}{\maketil{,}}
\newcommand{\cdottil}{\cdot\!\ldots\!\cdot}
\newcommand{\rst}[1]{|_{#1}}
\newcommand{\defiff}{\mathrel{\stackrel{\mathsf{\triangle}}{\Leftrightarrow}}}
\newcommand{\defeq}{\triangleq}
\newcommand{\powerset}[1]{\mathcal{P}({#1})}
\renewcommand{\implies}{\Rightarrow}
\newcommand{\raisemath}[1]{\mathpalette{\raisem@th{#1}}}
\newcommand{\raisem@th}[3]{\raisebox{#1}{$#2#3$}}
\newcommandx{\yaHelper}[2][1=\empty]{%
\ifthenelse{\equal{#1}{\empty}}%
  { \ensuremath{ \scriptstyle{ #2 } } } 
  { \raisebox{ #1 }[0pt][0pt]{ \ensuremath{ \scriptstyle{ #2 } } } }  
}
\newcommandx{\yrightarrow}[4][1=\empty, 2=\empty, 4=\empty, usedefault=@]{%
  \ifthenelse{\equal{#2}{\empty}}
  { \xrightarrow{ \protect{ \yaHelper[ #4 ]{ #3 } } } } 
  { \xrightarrow[ \protect{ \yaHelper[ #2 ]{ #1 } } ]{ \protect{ \yaHelper[ #4 ]{ #3 } } } } 
}
\newcommand{\astep}[1]{\mathrel{\raisebox{-0.8pt}{\ensuremath{\xrightarrow{#1}}}}}
\colorlet{colorPO}{gray!60!black}
\colorlet{colorRF}{green!60!black}
\colorlet{colorMO}{orange}
\colorlet{colorFR}{purple}
\colorlet{colorECO}{red!80!black}
\colorlet{colorSYN}{green!40!black}
\colorlet{colorHB}{blue}
\colorlet{colorPPO}{magenta}
\colorlet{colorPB}{olive}
\colorlet{colorSBRF}{olive}
\colorlet{colorRMW}{olive!70!black}
\colorlet{colorRSEQ}{blue}
\colorlet{colorSC}{violet}
\colorlet{colorPSC}{violet}
\colorlet{colorREL}{olive}
\colorlet{colorCONFLICT}{olive}
\colorlet{colorRACE}{olive}
\colorlet{colorWB}{orange!70!black}
\colorlet{colorPSC}{violet}
\colorlet{colorSCB}{violet}
\colorlet{colorDEPS}{violet}
\colorlet{colorS}{orange!70!black}
\tikzset{
   every path/.style={>=stealth},
   po/.style={->,color=colorPO,thin,shorten >=-0.5mm,shorten <=-0.5mm},
   sw/.style={->,color=colorSYN,shorten >=-0.5mm,shorten <=-0.5mm},
   rf/.style={->,color=colorRF,dashed,,shorten >=-0.5mm,shorten <=-0.5mm},
   hb/.style={->,color=colorHB,thick,shorten >=-0.5mm,shorten <=-0.5mm},
   mo/.style={->,color=colorMO,dotted,very thick,shorten >=-0.5mm,shorten <=-0.5mm},
   no/.style={->,dotted,thick,shorten >=-0.5mm,shorten <=-0.5mm},
   fr/.style={->,color=colorFR,dotted,thick,shorten >=-0.5mm,shorten <=-0.5mm},
   deps/.style={->,color=colorDEPS,dotted,thick,shorten >=-0.5mm,shorten <=-0.5mm},
   vis/.style={->,color=colorDEPS,dotted,thick,shorten >=-0.5mm,shorten <=-0.5mm},
   rmw/.style={->,color=colorRMW,thick,shorten >=-0.5mm,shorten <=-0.5mm},
}
\newcommand{\rlab}[3]{{\lR}^{#1}({#2},{#3})}
\newcommand{\wlab}[3]{{\lW}^{#1}({#2},{#3})}
\newcommand{\ulab}[4]{{\lU}^{#1}({#2},{#3},{#4})}
\newcommand{\forklab}[2]{{\lFORK}({#1},{#2})}
\newcommand{\joinlab}[2]{{\lJOIN}({#1},{#2})}
\newcommand{\forkcmd}[2]{{\lFORK}({#1},{#2})}
\newcommand{\joincmd}[2]{{\lJOIN}({#1},{#2})}
\newcommand{\lR}{{\mathtt{R}}}
\newcommand{\lW}{{\mathtt{W}}}
\newcommand{\lU}{{\mathtt{RMW}}}
\newcommand{\lFORK}{{\mathtt{FORK}}}
\newcommand{\lJOIN}{{\mathtt{JOIN}}}
\newcommand{\lQ}{{\mathtt{Q}}}
\newcommand{\lTheta}{{\mathbf{\Theta}}}
\newcommand{\lTID}{{\mathtt{tid}}}
\newcommand{\lVALO}{{\mathtt{val}}}
\newcommand{\lUFLAG}{{\mathtt{rmw}}}
\newcommand{\Tid}{\mathsf{Tid}}
\newcommand{\Loc}{\mathsf{Loc}}
\newcommand{\Val}{\mathsf{Val}}
\newcommand{\Lab}{\mathsf{Lab}}
\newcommand{\Reg}{\mathsf{Reg}}
\newcommand{\inter}{I}
\newcommand{\potassert}[2]{{#1} \!\ltimes\! #2 }
\newcommand{\gassert}{\varphi}
\newcommand{\chop}{\mathbin{;}}
\newcommand{\writeInstn}{\kw{STORE}}
\newcommand{\wrtInstn}{\kw{WRITE}}
\newcommand{\readInstn}{\kw{LOAD}}
\newcommand{\casInstn}{\kw{CAS}}
\newcommand{\swapInstn}{\kw{SWAP}}
\newcommand{\faddInstn}{\kw{FADD}}
\newcommand{\readInst}[2]{#1 \;{:=}\;\readInstn({#2})}
\newcommand{\writeInst}[2]{\writeInstn(#1,#2)}
\newcommand{\wrtInst}[2]{\wrtInstn(#1,#2)}
\newcommand{\assignInst}[2]{#1\;{:=}\;#2}
\newcommand{\swapInst}[2]{\swapInstn({#1},{#2})}
\newcommand{\makemodel}[1]{\ensuremath{{\mathsf{#1}}}\xspace}
\newcommand{\SRA}{\makemodel{SRA}}
\newcommand{\SC}{\makemodel{SC}}
\newcommand{\lora}{\makemodel{Piccolo}}
\newcommand{\makeL}[1]{{\ensuremath{\mathsf{lo}}{#1}}\xspace}
\newcommand{\SRAL}{\makeL{\SRA}}
\newcommand{\msc}{m}
\newcommand{\memstate}{{m}}
\renewcommand{\L}{L}
\newcommand{\D}{D}
\newcommand{\DD}{{\mathcal{{D}}}}
\newcommand{\LL}{{\mathcal{{L}}}}
\newcommand{\eexp}{E}
\newcommand{\potstore}{\delta}
\newcommand{\potstores}{\mathsf{\Delta}}
\newcounter{mylabelcounter}
\newcommand{\labelAxiom}[2]{%
\hfill{\normalfont\textsc{(#1)}}\refstepcounter{mylabelcounter}
\immediate\write\@auxout{%
  \string\newlabel{#2}{{\unexpanded{\normalfont\textsc{#1}}}{\thepage}{{\unexpanded{\normalfont\textsc{#1}}}}{mylabelcounter.\number\value{mylabelcounter}}{}}
}%
}
\newcommand{\squishlist}[1][$\bullet$]{
 \begin{list}{#1}
  { \setlength{\itemsep}{0pt}
     \setlength{\parsep}{0pt}
     \setlength{\topsep}{1pt}
     \setlength{\partopsep}{0pt}
     \setlength{\leftmargin}{1.2em}
     \setlength{\labelwidth}{0.5em}
     \setlength{\labelsep}{0.4em} } }
\newcommand{\squishend}{
  \end{list}  }
\definecolor{DarkGreen}{rgb}{0.05, 0.45, 0.05}
\newcommand{\M}{\mathcal{M}}
\newcommand{\loc}{{x}}
\newcommand{\loca}{{y}}
\newcommand{\tid}{{\tau}}
\newcommand{\tida}{{\pi}}
\newcommand{\lab}{{l}}
\newcommand{\mlab}{{\theta}}
\newcommand{\val}{v}
\newcommand{\valr}{\val_\lR}
\newcommand{\valw}{\val_\lW}
\newcommand{\uflag}{u}
\newcommand{\reg}{{r}}
\renewcommand{\exp}{{e}}
\newcommand{\regstore}{{\gamma}}
\newcommand{\Regstores}{{\mathsf{\Gamma}}}
\newcommand{\regset}{Z}
\newcommand{\pcmd}{c}
\newcommand{\ipcmd}{\tilde{c}}
\newcommand{\cmd}{C}
\newcommand{\cmdmap}{\mathcal{C}}
\newcommand{\tidlab}[2]{\tup{{#1}:{#2}}}
\newcommand{\cmdstep}[2]{{}\mathrel{\raisebox{-0.8pt}{\ensuremath{\xrightarrow{#1,#2}}}}{}}
\newcommand{\asteptidlab}[3]{{}\mathrel{\raisebox{-0.8pt}{\ensuremath{\xrightarrow{{#1},{#2}}}}_{#3}}{}}
\newcommand{\asteplab}[2]{{}\mathrel{\raisebox{-0.8pt}{\ensuremath{\xrightarrow{#1}}}_{#2}}{}}
\newcommand{\mylabel}[2]{#2\def\@currentlabel{#2}\label{#1}}
\newcommand{\ctid}[1]{\mathtt{T}_#1}
\newcommand{\cloc}[1]{\mathtt{#1}}
\newcommand{\creg}[1]{\mathtt{#1}}
\newcommand{\clist}[1]{\mathtt{#1}} 
\newcommand{\seq}{\mathbin{;}}
\newcommand{\cs}[1]{\overline{#1}}
\newcommand{\satm}{\ \underline{sat}_\M\ }
\newcommand{\satsra}{\ \underline{sat}_\SRA\ }
\newcommand{\assume}{\mathsf{Assume}}
\newcommand{\commit}{\mathsf{Commit}}
\newcommand{\comp}{\xi}
\newcommand{\Comp}{\mathit{Comp}}
\newcommand{\pre}{P}
\newcommand{\post}{Q}
\newcommand{\pref}{\varphi}
\newcommand{\postf}{\psi}
\newcommand{\mida}{R}
\newcommand{\linefill}{\cleaders\hbox{$\smash{\mkern-2mu\mathord-\mkern-2mu}$}\hfill\vphantom{\lower1pt\hbox{$\rightarrow$}}}  
\newcommand{\transi}[2]{\mathrel{\lower1pt\hbox{$\mathrel-_{\vphantom{#2}}\mkern-8mu\stackrel{#1}{\linefill_{\vphantom{#2}}}\mkern-11mu\rightarrow_{#2}$}}}
\newcommand{\trans}[1]{\transi{#1}{{}}}
\newcommand{\ntransi}[2]{\mathrel{\lower1pt\hbox{$\mathrel-_{\vphantom{#2}}\mkern-8mu\stackrel{#1}{\linefill_{\vphantom{#2}}}\mkern-8mu\nrightarrow_{#2}$}}}
\newcommand{\astate}{\sigma}
\newcommand{\allstates}{\Sigma}
\newcommand{\compstep}{\mathtt{cmp}}
\newcommand{\envstep}{\mathtt{env}}
\newcommand{\memstep}{\mathtt{mem}}
\newcommand{\last}{\mathsf{last}}
\newcommand{\wf}{\mathsf{wf}}
\newcommand{\rem}[2]{\mathsf{rem}(#1,#2)} 
\newcommand{\ind}[2]{\mathsf{ind}(#1,#2)} 
\newcommand{\prop}{{P}}
\newcommand{\propg}{{G}}
\newcommand{\assert}[1]{{\color{blue}{\ensuremath{\left\{
    \begin{array}[c]{@{}l@{}}
      #1
    \end{array}
\right\}}}}}
\newcommand{\sassert}[1]{{\color{blue}{\ensuremath{\{
      #1
\}}}}}
\newcommand{\mhoare}[3]{\{#1\}\;{#2}\;\{#3\}}
\newcommand{\ahoare}[4]{\sassert{#1} \; {#2} \; \sassert{#3}}
\newcommand{\srahoare}[3]{\ahoare{#1}{#2}{#3}{\satsra}}
\newcommand{\gc}[2]{\{#1\} \; #2}
\newcommand{\subst}[2]{(#1:=#2)}
\newif\iflong
\newif\ifdraft
\newcommand{\mytitle}{Rely-Guarantee Reasoning for Causally Consistent Shared Memory (Extended Version)}
\title{\mytitle\thanks{Lahav is supported by the Israel Science
Foundation (grants 1566/18 and 814/22) and by the European Research Council
(ERC) under the European Union’s Horizon 2020 research and innovation programme
(grant agreement no. 851811).
Dongol is supported by EPSRC grants EP/X015149/1, EP/V038915/1, EP/R025134/2, VeTSS, and ARC Discovery Grant DP190102142. Wehrheim is supported by the German Research Council DFG (project no. 467386514).}}
\author{Ori Lahav\inst{1}\orcidID{0000-0003-4305-6998} \and
Brijesh Dongol\inst{2}\orcidID{0000-0003-0446-3507} \and
Heike Wehrheim\inst{3}\orcidID{0000-0002-2385-7512}}
\institute{Tel Aviv University, Tel Aviv, Israel \and
University of Surrey, Guildford, UK \and  
University of Oldenburg, Oldenburg, Germany}
\begin{document}

 \maketitle

%

\begin{abstract}
  Rely-guarantee (RG) is a highly influential compositional proof
  technique for concurrent programs, which was originally developed
  assuming a sequentially consistent shared memory.  In this paper, we first
  generalize RG to make it parametric with respect to the underlying
  memory model by introducing an RG framework that is applicable to
  any model axiomatically characterized by Hoare triples.  Second, we
  instantiate this framework for reasoning about concurrent programs
  under \emph{causally consistent memory}, which is formulated using a
  recently proposed \emph{potential-based} operational semantics,
  thereby providing the first reasoning technique for such semantics.
  The proposed program logic, which we call \lora, employs a novel
  assertion language allowing one to specify ordered sequences of states that
  each thread may reach.  We employ \lora for multiple litmus tests,
  as well as for an adaptation of Peterson's algorithm for mutual exclusion
  to  causally consistent memory.
\end{abstract}






     


\section{Introduction}

Rely-guarantee (RG) is a fundamental compositional 
proof technique for concurrent programs~\cite{DBLP:journals/toplas/Jones83,DBLP:journals/fac/XuRH97}. 
Each program component $P$ is specified using {\em rely} and {\em guarantee} conditions,
which means that $P$ can tolerate any environment interference that follows its rely condition,
and generate only interference included in its guarantee condition.
Two components can be composed in parallel provided that the rely of each component
agrees with the guarantee of the other.

The original RG framework and its soundness proof
have assumed a sequentially consistent (SC) memory~\cite{DBLP:journals/tc/Lamport79},
which is unrealistic in modern processor architectures and programming languages.
Nevertheless, the main principles behind RG are not at all specific for SC.
Accordingly, our first main contribution, is to formally decouple the
underlying memory model from the RG proof principles,
by proposing a generic RG framework parametric in the input memory model.
To do so, we assume that the underlying memory model
is axiomatized by Hoare triples specifying pre- and postconditions on memory 
states for each primitive operation (\eg loads and stores).
This enables the formal development of RG-based logics for different
shared memory models as instances of one framework, where all build on a uniform
soundness infrastructure of the RG rules (\eg for sequential and parallel composition), but employ different specialized
assertions to describe the possible memory states,
where specific soundness arguments are only needed for primitive memory operations.

The second contribution of this paper is an instance of the general RG framework
for \emph{causally consistent shared memory}.
The latter stands for a family of wide-spread and well-studied memory models weaker than SC,
which are sufficiently strong for implementing a variety of synchronization idioms~\cite{DBLP:journals/siglog/Lahav19,DBLP:journals/dc/AhamadNBKH95,DBLP:conf/popl/BouajjaniEGH17}.
Intuitively, unlike SC, causal consistency allows different threads to observe writes to memory in different orders,
as long as they agree on the order of writes that are causally related.
This concept can be formalized in multiple ways, and here
 we target a strong form of causal consistency, called \emph{strong release-acquire} (SRA)~\cite{DBLP:conf/popl/LahavGV16,DBLP:journals/toplas/LahavB22}
(and equivalent to ``causal convergence'' from \cite{DBLP:conf/popl/BouajjaniEGH17}), 
which is a slight strengthening of the well-known release-acquire (RA) model (used by C/C++11). 
(The variants of causal consistency only differ for programs with write/write races~\cite{DBLP:journals/toplas/LahavB22,DBLP:journals/lmcs/BeillahiBE21},
which are rather rare in practice.)


Our starting point for axiomatizing SRA as Hoare triples is the
\emph{potential-based} operational semantics of SRA, which was
recently introduced with the goal of establishing the decidability of
control state reachability under this model~\cite{DBLP:conf/pldi/LahavB20,DBLP:journals/toplas/LahavB22}
(in contrast to undecidability under
RA~\cite{DBLP:conf/pldi/AbdullaAAK19}).  Unlike more standard
presentations of weak memory models whose states record information
about the \emph{past} 
(\eg in the form of store buffers containing executed writes before they are globally visible~\cite{DBLP:conf/tphol/OwensSS09}, 
partially ordered execution graphs~\cite{DBLP:conf/popl/LahavGV16,DBLP:journals/toplas/AlglaveMT14,DBLP:conf/ppopp/DohertyDWD19},
or collections of timestamped messages and thread
views~\cite{DBLP:conf/ecoop/KaiserDDLV17,DBLP:conf/popl/KangHLVD17,DBLP:conf/ecoop/DalvandiDDW19,DBLP:journals/jar/DalvandiDDW22,DBLP:conf/fm/WrightBD21,DBLP:conf/esop/BilaDLRW22}),
the states of the potential-based model track possible \emph{futures}
ascribing what sequences of observations each thread can perform. 
We find this approach to be a particularly appealing candidate 
for Hoare-style reasoning which would naturally generalize SC-based reasoning.
Intuitively, while an assertion in SC specifies possible observations at a given program point,
an assertion in a potential-based model should specify possible \emph{sequences} of observations.

To pursue this direction, we introduce a novel assertion language, resembling temporal logics, which allows one to express properties of sequences of states.
For instance, our assertions can express that a certain thread may currently read $\cloc{x}=0$,
 but it will have to read $\cloc{x}=1$ once it reads $\cloc{y}=1$.
Then, we provide Hoare triples for  SRA in this assertion language, and incorporate them in the general RG framework.
The resulting program logic, which we call \lora, provides a novel approach to reason on concurrent programs under causal consistency,
which allows for simple and direct proofs, and, we believe, 
may constitute a basis for automation in the future.






\section{Motivating Example} \label{sec:motivation}

To make our discussion concrete, consider the message passing program (MP) in \cref{fig:mp-SC,fig:mp-SRA}, 
comprising shared variables $\cloc{x}$ and $\cloc{y}$ and local registers $\creg{a}$ and $\creg{b}$. 
The proof outline in \cref{fig:mp-SC} assumes SC, whereas \cref{fig:mp-SRA} assumes SRA. 
In both cases, at the end of the execution, we show that if $\creg{a}$ is $1$, then $\creg{b}$ must also be $1$. 
We use these examples to explain the two main concepts introduced in this paper: 
$(i)$ a generic RG framework and 
$(ii)$ its instantiation with a potential-focused assertion system that enables reasoning under SRA.

\begin{figure}[t]
  \centering 
  \begin{minipage}[b]{0.42\columnwidth}
  \scalebox{1}{
  \begin{tabular}[b]{@{}c@{}} 
    $\assert{ \cloc y \neq 1    } $ \\ 
  $\begin{array}{@{}l@{~}||@{~}l@{}}
    \begin{array}[t]{l}
     \textbf{Thread } \ctid{1}
     \\
     \assert{ \true  } \\ 
     1: \writeInst{\cloc{x}}{1}; \\ 
      \assert{ \cloc{x} = 1  } \\ 
     2: \writeInst{\cloc{y}}{1} \\
     \assert{ \true   } 
     \end{array}
    & 
    \begin{array}[t]{l}
     \textbf{Thread } \ctid{2}
     \\
      \assert{ \cloc{y} = 1 \implies \cloc{x} = 1    } \\
     3: \readInst{\creg{a}}{\cloc{y}} ;
     \\
      \assert{ \creg{a} = 1  \implies \cloc x = 1  } \\ 
     4: \readInst{\creg{b}}{\cloc{x}} \\ 
     \assert{ \creg{a} = 1 \implies \creg{b} = 1 } 
     \end{array}
     \end{array}$ \\
     $\assert{ \creg{a}=1 \implies \creg{b}=1    } $
  \end{tabular}}
 \caption{Message passing in SC}
 \label{fig:mp-SC}
  \end{minipage}
  \hfill
  \begin{minipage}[b]{0.52\columnwidth}
  \scalebox{1}{
  \begin{tabular}[b]{@{}c@{}} 
    $\assert{\potassert{\ctid{0}}{[ \cloc{y} \neq 1]}} $ \\ 
  $\begin{array}{@{}l@{~}||@{~}l@{}}
    \begin{array}[t]{l}
     \textbf{Thread } \ctid{1}
     \\
     \assert{ \true  } \\ 
     1: \writeInst{\cloc{x}}{1}; \\ 
      \assert{
      \potassert{\ctid{1}}{[ \cloc{x}=1]}} \\ 
     2: \writeInst{\cloc{y}}{1} \\
     \assert{ \true   } 
     \end{array}
    & 
    \begin{array}[t]{l}
     \textbf{Thread } \ctid{2}
     \\
      \assert{
      \potassert{\ctid{2}}{[\cloc{y}\neq 1]\chop[ \cloc{x}=1]}}
      \\

      3: \readInst{\creg{a}}{\cloc{y}} ;
     \\
      \assert{ \creg{a} = 1  \implies
      \potassert{\ctid{2}}{[ \cloc{x}=1]}
      } \\ 
     4: \readInst{\creg{b}}{\cloc{x}} \\ 
     \assert{ \creg{a} = 1 \implies \creg{b} = 1 } 
     \end{array}
     \end{array}$ \\
     $\assert{\creg{a}=1 \implies \creg{b}=1    } $
  \end{tabular} }
 \caption{Message passing in SRA}
 \label{fig:mp-SRA}
  \end{minipage}

\end{figure}

\paragraph{\textbf{Rely-Guarantee.}}
The proof outline in \cref{fig:mp-SC} can be 
read as an RG derivation:
\begin{enumerate}[noitemsep,topsep=2pt]
\item Thread $\ctid{1}$ locally establishes its postcondition when starting from any state that satisfies its precondition.
This is trivial since its postcondition is $\true$.
\item Thread $\ctid{1}$ relies on the fact that its used assertions are \emph{stable} \wrt interference from its environment.
We formally capture this condition by a rely set $\cR_1 \defeq \set{\true, \cloc{x} = 1}$.
\item Thread $\ctid{1}$ guarantees to its concurrent environment that its only interferences are 
$\writeInst{\cloc{x}}{1}$ and $\writeInst{\cloc{y}}{1}$, and furthermore that $\writeInst{\cloc{y}}{1}$ is only performed when $\cloc{x} = 1$ holds.
We formally capture this condition by a guarantee set $\cG_1 \defeq \set{\gc{\true}{\ctid{1} \mapsto \writeInst{\cloc{x}}{1}},
  \gc{\cloc{x} = 1}{\ctid{1} \mapsto \writeInst{\cloc{y}}{1}}} $, where each element is a command guarded by a precondition.
\item Thread $\ctid{2}$ locally establishes its postcondition when starting from any state that satisfies its precondition.
This is straightforward using standard Hoare rules for assignment and sequential composition.
\item Thread $\ctid{2}$'s rely set is again obtained by collecting all the assertions used in its proof:
 $\cR_2 \defeq \set{\cloc{y} = 1 \implies \cloc{x} = 1, \creg{a} = 1 \implies \cloc{x} = 1, \creg{a} = 1 \implies \creg{b} = 1}$.
 Indeed, the local reasoning for $\ctid{2}$ needs all these assertions to be stable under the environment interference.
\item Thread $\ctid{2}$'s guarantee set is given by: 
$$\small \cG_2 \defeq \inset{\gc{\cloc{y} = 1 \implies \cloc{x} = 1}{\ctid{2} \mapsto \readInst{\creg{a}}{\cloc{y}}},
    \gc{\creg{a} = 1 \implies \cloc{x} = 1}{\ctid{2} \mapsto
      \readInst{\creg{b}}{\cloc{x}}}} $$
\item To perform the  parallel composition,
$\tup{\cR_1, \cG_1}$ and $\tup{\cR_2, \cG_2}$ should be
\emph{non-interfering}. This involves showing that each $R \in \cR_i$ is
\emph{stable} under each $G \in \cG_j$ for $i \neq j$. That is, if
$G = \gc{P}{\tid \mapsto c}$, we require
the Hoare triple $\mhoare{P \cap R}{\tid \mapsto c}{R}$ to hold.
In this case, these proof obligations are straightforward to
discharge using Hoare's assignment axiom 
(and is trivial for $i=1$ and $j=2$ since
 load instructions leave the memory intact). \label{non-interfering}
\end{enumerate}

\begin{remark}
  Classical treatments of RG involve two related
  ideas~\cite{DBLP:journals/toplas/Jones83}: (1) specifying a
  component by rely and guarantee conditions (together with standard
  pre- and postconditions); and (2) taking the relies and guarantees
  to be binary relations over states.  Our approach adopts (1) but not (2).
Thus, it can be seen as an RG
  presentation of the Owicki-Gries method~\cite{DBLP:journals/acta/OwickiG76}, as was previously
  done in~\cite{DBLP:conf/icalp/LahavV15}. We have not observed an
  advantage for using binary relations in our examples, but the framework can be
  straightforwardly modified to do so.
\end{remark}

Now, observe that substantial aspects of the above reasoning are
\emph{not} directly tied with SC.  This includes the Hoare rules for
compound commands (such as
sequential composition above), the idea of specifying a thread using
collections of stable rely assertions and guaranteed guarded primitive
commands, and the non-interference condition for parallel composition.
To carry out this generalization, we assume that we are provided an
assertion language whose assertions are interpreted as \emph{sets of memory
states} (which can be much more involved than simple mappings of
variables to values), and a set of valid Hoare triples for the
primitive instructions.  The latter is used for checking validity of
primitive triples, (\eg
$\mhoare{P}{\ctid{1} \mapsto \writeInst{\cloc{x}}{1}}{Q}$), as well as
non-interference conditions (\eg
$\mhoare{P \cap R}{\ctid{1} \mapsto \writeInst{\cloc{x}}{1}}{R}$).  In
\cref{sec:genRG}, we present this generalization, and establish the
soundness of RG principles independently of the memory model.

\paragraph{\textbf{Potential-based reasoning.}}
The second contribution of our work is an application of the above to
develop a logic for a potential-based operational semantics that captures SRA.
In this semantics every memory state records sequences of store mappings
(from shared variables to values) that each thread may observe.
For example, assuming all variables are initialized to $0$, 
if $\ctid{1}$ executed its code until completion before $\ctid{2}$ even started
(so under SC the memory state is the store $\set{\cloc{x}\mapsto 1, \cloc{y} \mapsto 1}$),
we may reach the SRA state in which
$\ctid{1}$'s potential consists of one store $\set{\cloc{x}\mapsto 1, \cloc{y} \mapsto 1}$,
and $\ctid{2}$'s potential is the sequence of stores:
$$\tup{\set{\cloc{x}\mapsto 0, \cloc{y} \mapsto 0}, 
\set{\cloc{x}\mapsto 1, \cloc{y} \mapsto 0},
\set{\cloc{x}\mapsto 1, \cloc{y} \mapsto 1}},$$
which captures the stores that $\ctid{2}$ may observe 
in the order it may observe them.  
Naturally, potentials are \emph{lossy}
allowing threads to non-deterministically lose a subsequence of the current store sequence,
so they can progress in their sequences.
Thus, $\ctid{2}$ can read $1$ from $\cloc{y}$ only after it loses the first two stores in its potential,
and from this point on it can only read $1$ from $\cloc{x}$.
Now, one can see that \emph{all} potentials of $\ctid{2}$ at its initial program point
are, in fact, subsequences of the above sequence (regardless of where $\ctid{1}$ is),
and conclude that $\creg{a} = 1 \implies \creg{b} = 1$ holds when $\ctid{2}$ terminates.

To capture the above informal reasoning in a Hoare logic, 
we designed a new form of assertions capturing possible locally observable sequences of stores, rather than one global store,
which can be seen as a restricted fragment of linear temporal logic.
The proof outline using these assertions is given in \cref{fig:mp-SRA}.
In particular, $[\cloc{x} = 1]$ is satisfied by all store sequences in which every store maps $\cloc{x}$ to $1$,
whereas $[\cloc{y} \neq 1] \chop [\cloc{x} = 1]$  is satisfied by all store sequences
that can be split into a (possibly empty) prefix whose value for $\cloc{y}$ is not $1$
followed by a (possibly empty) suffix whose value for $\cloc{x}$ is $1$.
Assertions of the form $\potassert{\tid}{\inter}$ state that the potential of thread $\tid$
includes only store sequences that satisfy $\inter$.

The first assertion of $\ctid{2}$ is implied by the initial condition,
$\potassert{\ctid{0}}{[ \cloc{y} \neq 1]}$, since the potential of the
parent thread $\ctid{0}$ is inherited by the forked child threads and 
$\potassert{\ctid{2}}{[ \cloc{y} \neq 1]}$ implies
$\potassert{\ctid{2}}{[ \cloc{y} \neq 1] \chop \inter}$ for any $\inter$.
Moreover,
$\potassert{\ctid{2}}{[ \cloc{y} \neq 1] \chop [\cloc{x} = 1]}$ is
preserved by
\begin{enumerate*}[label=(\roman*)]
\item 
line 1 because writing 1 to $\cloc x$ leaves
$[\cloc{y} \neq 1]$ unchanged and re-establishes $[\cloc x = 1]$; and
\item 
 line 2 because the 
semantics for SRA ensures that after reading $1$ from $\cloc{y}$ by $\ctid{2}$, 
the thread $\ctid{2}$ is confined by $\ctid{1}$'s
potential just before it wrote $1$ to $\cloc{y}$,
which has to satisfy the precondition $\potassert{\ctid{1}}{[ \cloc{x}=1]}$.
(SRA allows to update the other threads' potential
only when the suffix of the potential after the update is observable by the writer thread.)
\end{enumerate*}


In \cref{sec:logic} we formalize these arguments
as Hoare rules for the primitive instructions,
whose soundness is checked using the potential-based 
operational semantics and the interpretation of the assertion language.
Finally, \lora is obtained by incorporating these Hoare rules in the general RG framework.

\begin{remark}
\label{rem:sra}
  Our presentation of the potential-based semantics for SRA (fully
  presented in \cref{sec:potential}) deviates from the original one in
  \cite{DBLP:journals/toplas/LahavB22}, where it was called \SRAL.  The most
  crucial difference is that while \SRAL's potentials consist of lists
  of per-location read options, our potentials consist of lists of
  \emph{stores} assigning a value to every variable.  (This is similar
  in spirit to the adaptation of load buffers for
  TSO~\cite{DBLP:conf/concur/AbdullaABN16,DBLP:journals/lmcs/AbdullaABN18}
  to snapshot buffers in~\cite{DBLP:journals/pacmpl/AbdullaABKS21}).
  Additionally, unlike \SRAL, we disallow empty potential lists,
  require that the potentials of the different threads agree on the
  very last value to each location, and handle read-modify-write (RMW)
  instructions differently.  We employed these modifications to \SRAL
  as we observed that direct reasoning on \SRAL states is rather
  unnatural and counterintuitive, as \SRAL allows traces that {\em
    block} a thread from reading any value from certain locations
  (which cannot happen in the version we formulate).  For example, a
  direct interpretation of our assertions over \SRAL states would
  allow states in which $\potassert{\tid}{[\loc=\val]}$ and
  $\potassert{\tid}{[\loc\neq\val]}$ both hold (when $\tid$ does not
  have any option to read from $\loc$), while these assertions are
  naturally contradictory when interpreted on top of our modified SRA
  semantics.  To establish confidence in the new potential-based
  semantics we have proved \emph{in Coq} its equivalence to the
  standard execution-graph based semantics of SRA (over 5K lines of
  Coq proofs)~\cite{CAV-Artifact}.
\end{remark}

\section{Preliminaries: Syntax and Semantics}

\begin{figure}[t]
  \centering
  \begin{tabular}{rll@{\ \ \  }rll}
\emph{values} & $\ $ & $\val \in \Val = \set{0,1,\ldots}$ 
&  \emph{shared variables} & $\ $ & $\loc, \loca \in \Loc = \set{\cloc{x},\cloc{y},\ldots}$
\\ \emph{local registers} & $\ $ & $\reg \in \Reg = \set{\creg{a},\creg{b},\ldots}$ 
&  \emph{thread identifiers} & $\ $ & $\tid, \tida \in \Tid = \set{\ctid{0},\ctid{1},\ldots}$
\end{tabular}

\[\begin{array}{@{} l l @{}}
\exp ::=  & \reg \ALT \val \ALT \exp + \exp \ALT \exp = \exp \ALT \neg \exp \ALT \exp \land \exp \ALT \exp \lor \exp \ALT \ldots
\\[0.5ex]
\pcmd ::=  &
\inarrT{\assignInst{\reg}{\exp} 
\ALT \writeInst{\loc}{\exp}
\ALT \readInst{\reg}{\loc}
\ALT \swapInst{\loc}{\exp}}
\qquad\qquad \ipcmd ::=  
\tup{\pcmd,\assignInst{\vec{\reg}}{\vec{\exp}}}
\\[0.5ex]
\cmd ::=  &
\pcmd
\ALT \ipcmd
\ALT \skipc
\ALT \cmd \sep \cmd
\ALT \ite{\exp}{\cmd}{\cmd}
\ALT \while{\exp}{\cmd}
\ALT \cmd \spar{\tid}{}{\tid} \cmd 
\end{array}
\]
\caption{Program syntax}
\label{fig:syntax}
\end{figure}

In this section we describe the underlying program language,
leaving the shared-memory semantics parametric.

\paragraph{\textbf{Syntax.}}
The syntax of programs, given in \cref{fig:syntax}, is mostly
standard, comprising primitive (atomic) commands $\pcmd$ and compound
commands $\cmd$. The non-standard components are instrumented commands
$\ipcmd$, which are meant to atomically execute a primitive command $\pcmd$ and
a (multiple) assignment $\assignInst{\vec{\reg}}{\vec{\exp}}$.  Such instructions are needed to
support auxiliary (\aka ghost) variables in RG proofs.  In
addition, $\swapInstn$ (\aka atomic exchange) is an
example of an RMW instruction.  For brevity, other
standard RMW instructions, such as $\faddInstn$ and $\casInstn$, are
omitted.




%

Unlike many weak memory models that only support top-level parallelism,
we include dynamic thread creation via commands of the form
$\cmd_1 \spar{\tid_{1}}{}{\tid_{2}} \cmd_2$ that forks two threads named
$\tid_1$ and $\tid_2$ that execute the commands $\cmd_1$ and $\cmd_2$,
respectively. Each $\cmd_i$ may itself comprise further
parallel compositions. 
Since thread identifiers are explicit, we
require commands to be \emph{well formed}. Let $\Tid(\cmd)$ be the set
of all thread identifiers that appear in $\cmd$.
A command $\cmd$ is \emph{well formed}, denoted $\wf(\cmd)$, if
parallel compositions inside employ disjoint sets of thread
identifiers.  This notion is formally defined by induction on the
structure of commands, with the only interesting case being
$\wf(\cmd_1 \spar{\tid_1}{}{\tid_2} \cmd_2)$ if
$\wf(\cmd_1) \wedge \wf(\cmd_2) \wedge \tid_1 \neq \tid_2 \wedge
\Tid(\cmd_1) \cap \Tid(\cmd_2) =\emptyset$.




\newcommand{\pcmdsem}{
\begin{figure}[t]
\begin{mathpar}
\small
\inferrule*{
\regstore'=\regstore[\reg \mapsto \regstore(\exp)]
}{
\assignInst{\reg}{\exp} \gg \regstore \astep{\epsl} \regstore'
}
\and
\inferrule*{
\lab =\wlab{}{\loc}{\regstore(\exp)} \\
}{
\writeInst{\loc}{\exp} \gg \regstore \astep{\lab} \regstore
}
\and
\inferrule*{
\lab =\rlab{}{\loc}{\val} \\
\regstore'=\regstore[\reg \mapsto \val]
}{
\readInst{\reg}{\loc} \gg \regstore \astep{\lab} \regstore'
}
\\
\inferrule*{
\lab =\ulab{}{\loc}{\val}{\regstore(\exp)} 
}{
\swapInst{\loc}{\exp} \gg \regstore \astep{\lab} \regstore
}
\and
\inferrule*{
\pcmd \gg \regstore \astep{\lab_\epsl} \regstore_0\\\\
\assignInst{\reg_1}{\exp_1} \gg \regstore_0 \astep{\epsl} \regstore_1\\ \ldots  \\
\assignInst{\reg_n}{\exp_n} \gg \regstore_{n-1} \astep{\epsl} \regstore_n
}{
\tup{\pcmd,\assignInst{\tup{\reg_1 \til \reg_n}}{\tup{\exp_1 \til \exp_n}}} \gg \regstore \astep{\lab_\epsl} \regstore_n
}
\end{mathpar}

\caption{Small-step semantics of (instrumented) primitive commands  ($\ipcmd \gg \regstore \astep{\lab_\epsl} \regstore'$)}
\label{fig:pcmdsem}
\end{figure}
}

\newcommand{\cmdsem}{
\begin{figure}[t]
\begin{mathpar}
\small
\inferrule*{
\ipcmd \gg \regstore \astep{\lab_\epsl} \regstore'
}{
\tup{\ipcmd,\regstore} \astep{\lab_\epsl} \tup{\skipc,\regstore'}
}
\and
\inferrule*{
\tup{\cmd_1,\regstore} \astep{\lab_\epsl} \tup{\cmd_1',\regstore'} 
}{
\tup{\cmd_1 \sep \cmd_2,\regstore} \astep{\lab_\epsl} \tup{\cmd_1' \sep \cmd_2,\regstore'}
}
\and
\inferrule*{
\ }{
\tup{\skipc \sep \cmd_2,\regstore} \astep{\epsl} \tup{\cmd_2,\regstore}
}
\and
\inferrule*{
\regstore(\exp) = \btrue \implies i =1 \\\\
\regstore(\exp) \neq\btrue \implies i=2 
}{
\tup{\ite{\exp}{\cmd_1}{\cmd_2},\regstore} \astep{\epsl} \tup{\cmd_i,\regstore}
}
\hfill
\inferrule*{
\cmd'=\ite{\exp}{(\cmd \sep \while{\exp}{\cmd})}{\skipc}
}{
\tup{\while{\exp}{\cmd},\regstore} \astep{\epsl} \tup{\cmd',\regstore}
}
\end{mathpar}
\caption{Small-step semantics of commands  ($\tup{\cmd,\regstore} \astep{\lab_\epsl} \tup{\cmd',\regstore'}$)}
\label{fig:cmdsem}
\end{figure}
}

\newcommand{\cmdpoolsem}{
\begin{figure}[t]
\scalebox{0.95}{\begin{mathpar}
\small
\inferrule*{
\tup{\cmd,\regstore} \astep{\lab_\epsl} \tup{\cmd',\regstore'}
}{
  \inarr{\tup{\cmdmap_0 \uplus \set{\tid\mapsto \cmd},\regstore} \\
    \cmdstep{\tid}{\lab_\epsl} \tup{\cmdmap_0 \uplus \set{\tid\mapsto \cmd'},\regstore}}
}
\quad
\inferrule*{
\cmdmap(\tid) = \cmd_1 \spar{\tid_1}{}{\tid_2} \cmd_2 \\\\
 \tid_1 \nin \dom{\cmdmap} \\ \tid_2 \nin \dom{\cmdmap} \\\\
\lab = \forklab{\tid_1}{\tid_2} \\\\
\cmdmap'= \set{\tid_1 \mapsto \cmd_1, \tid_2 \mapsto \cmd_2}
}{
\tup{ \cmdmap,\regstore} \cmdstep{\tid}{\lab} \tup{ \cmdmap \uplus \cmdmap',\regstore}
}
\quad
\inferrule*{
  \cmdmap= \inset{\tid\mapsto \cmd_1 \spar{\tid_1}{}{\tid_2} \cmd_2, \\
    \tid_1 \mapsto \skipc, \tid_2 \mapsto \skipc}\\\\
\lab = \joinlab{\tid_1}{\tid_2} \\\\
\cmdmap'= \set{\tid \mapsto \skipc}
}{
\tup{ \cmdmap_0 \uplus \cmdmap,\regstore} \cmdstep{\tid}{\lab} \tup{ \cmdmap_0 \uplus \cmdmap',\regstore}
}
\end{mathpar}}

\caption{Small-step semantics of command pools  ($\tup{\cmdmap,\regstore} \cmdstep{\tid}{\lab_\epsl} \tup{\cmdmap',\regstore'}$)}
\label{fig:cmdpoolsem}
\end{figure}
}

\pcmdsem
\cmdsem
\cmdpoolsem

\paragraph{\textbf{Program semantics.}}
We provide small-step operational semantics to commands
independently of the memory system.
To 
connect this semantics to a given memory system,
its steps are instrumented with labels, as defined next.

\begin{definition}
\label{def:label}
A \emph{label} $\lab$ takes one of the following forms:
a read $\rlab{}{\loc}{\valr}$,
a write $\wlab{}{\loc}{\valw}$,
a read-modify-write $\ulab{}{\loc}{\valr}{\valw}$,
a fork $\forklab{\tid_1}{\tid_2}$,
or a join $\joinlab{\tid_1}{\tid_2}$,
where $\loc \in \Loc$, $\valr,\valw\in \Val$, and $\tid_1,\tid_2\in\Tid$.
We denote by $\Lab$ the set of all labels. 
\end{definition}

\begin{definition}
\label{def:register_store}
A \emph{register store} is a mapping $\regstore : \Reg \to \Val$.
Register stores are extended to expressions as expected. 
We denote by $\Regstores$ the set of all register stores. 
\end{definition}

The semantics of (instrumented) primitive commands is given in \cref{fig:pcmdsem}.
Using this definition, the semantics of commands is given in \cref{fig:cmdsem}.  
Its steps are of the form
$\tup{\cmd,\regstore}\astep{\lab_\epsl} \tup{\cmd',\regstore'}$ where
$\cmd$ and $\cmd'$ are commands, $\regstore$ and $\regstore'$ are
register stores, and $\lab_\epsl\in\Lab \cup \set{\epsl}$ ($\epsl$
denotes a thread internal step). We lift this semantics to
\emph{command pools} as follows. 

\begin{definition}
\label{def:command_pool}
A \emph{command pool} is a non-empty
partial function $\cmdmap$ from thread identifiers to commands,
such that the following hold:
\begin{enumerate}[noitemsep,topsep=2pt]
\item $\Tid(\cmdmap(\tid_1)) \cap \Tid(\cmdmap(\tid_2)) = \emptyset$ 
  for every   $\tid_{1} \neq \tid_{2}$ in $\dom{\cmdmap}$.
\item $\tid \nin \Tid(\cmdmap(\tid))$ for every $\tid\in\dom{\cmdmap}$.
\end{enumerate}
 We
 write command pools as sets of the form
 $\set{\tid_1\mapsto\cmd_1 \til \tid_n\mapsto\cmd_n}$.  
\end{definition}



Steps for command pools are given in \cref{fig:cmdpoolsem}.
They take the form 
$\tup{\cmdmap,\regstore}\cmdstep{\tid}{\lab_\epsl} \tup{\cmdmap',\regstore'}$,
where $\cmdmap$ and $\cmdmap'$ are command pools,
$\regstore$ and $\regstore'$ are register stores,
and $\tidlab{\tid}{\lab_\epsl}$
(with $\tid\in\Tid$ and $\lab_\epsl\in\Lab \cup \set{\epsl}$)
 is a \emph{command transition label}.



\paragraph{\textbf{Memory semantics.}}
To give semantics to programs 
under a 
memory model, we synchronize the transitions of a command $\cmd$ with
a memory system.  We leave the memory system parametric, and 
assume that it is represented by a labeled transition system (LTS) $\M$
with set of states denoted by $\M.\lQ$,
and steps denoted by $\asteplab{}{\M}$.
The transition labels of  general memory system $\M$ consist of non-silent program
transition labels (elements of $\Tid \times \Lab$) 
and a (disjoint) set $\M.\lTheta$ of internal memory actions,
which is again left parametric (used, \eg for memory-internal propagation of values).


\begin{example}
\label{example:SC}
The simple memory system that guarantees sequential consistency is denoted here by \SC. 
This memory system tracks the most recent value
written to each variable and has no internal transitions
($\SC.\lTheta=\emptyset$).  Formally, it is defined by
$\SC.\lQ\defeq \Loc \to \Val$
and $\asteplab{}{\SC}$ is
given by:
\begin{mathpar}
\small
\inferrule*{
\lab=\rlab{}{\loc}{\valr} \\\\
\msc(\loc)=\valr \\
}{{\msc} \asteptidlab{\tid}{\lab}{\SC} {\msc} }
\hfill
\inferrule*{
\lab=\wlab{}{\loc}{\valw} \\\\
\msc'=\msc[\loc \mapsto \valw]  
}{{\msc} \asteptidlab{\tid}{\lab}{\SC} {\msc'} }
\hfill
\inferrule*{
\lab=\ulab{}{\loc}{\valr}{\valw} \\\\
\msc(\loc)=\valr \\\\
\msc'=\msc[\loc \mapsto \valw]  
}{{\msc} \asteptidlab{\tid}{\lab}{\SC} {\msc'} }
\hfill
\inferrule*{
\lab \in\set{\forklab{\_}{\_},\joinlab{\_}{\_}}
}{{\msc} \asteptidlab{\tid}{\lab}{\SC} {\msc} }
\end{mathpar}
\end{example}

The composition of a program with a general memory system is defined next. 

\begin{definition}
\label{def:system}
The \emph{concurrent system} induced by a memory system $\M$, denoted by $\cs{\M}$,
is the LTS whose transition labels are the elements of $(\Tid \times (\Lab \cup \set{\epsl})) \uplus \M.\lTheta$;
states are triples of the form $\tup{\cmdmap,\regstore,\memstate}$ 
where $\cmdmap$ is a command pool, 
$\regstore$ is a register store, and $\memstate \in \M.\lQ$;
and the transitions 
are ``synchronized transitions'' of the program and the memory system,
using labels to decide what to synchronize on, formally given by:
\begin{mathpar}
\small
\inferrule*{
\tup{\cmdmap,\regstore} \cmdstep{\tid}{\lab} \tup{\cmdmap',\regstore'}
\\\\
\lab \in \Lab \\\memstate {\asteptidlab{\tid}{\lab}{\M}} \memstate'
}{\tup{\cmdmap,\regstore,\memstate} \asteptidlab{\tid}{\lab}{\cs{\M}} \tup{\cmdmap',\regstore',\memstate'}}
\hfill
\inferrule*{
\tup{\cmdmap,\regstore} \cmdstep{\tid}{\epsl} \tup{\cmdmap',\regstore'}}
{\tup{\cmdmap,\regstore,\memstate} \asteptidlab{\tid}{\epsl}{\cs{\M}} \tup{\cmdmap',\regstore',\memstate}}
\hfill
\inferrule*{
\mlab \in \M.\lTheta  \\\\
\memstate \asteplab{\mlab}{\M} \memstate' 
}{\tup{\cmdmap,\regstore,\memstate} \asteplab{\mlab}{\cs{\M}} \tup{\cmdmap,\regstore,\memstate'}}
\end{mathpar}
\end{definition}

\newcommand{\figrgrules}{
\begin{figure}[t]
\begin{mathpar}
\small
\inferrule[skip]{ }{\set{\tid \mapsto \skipc} \satm (\pre,\{\pre\},\emptyset,\pre)}
\and 
\inferrule[com]{ 
\M \vDash \mhoare{\pre}{\tid \mapsto \ipcmd}{\post}
}  
{\set{\tid \mapsto \ipcmd} \satm (\pre, \{\pre,\post\},\{\gc{\pre}{\tid \mapsto \ipcmd}\}, \post)}
\and 
\inferrule*[left=seq]{
\set{\tid \mapsto \cmd_1} \satm (\pre,\cR_1, \cG_1,\mida) \\
\set{\tid \mapsto \cmd_2} \satm (\mida,\cR_2,\cG_2,\post)}
{\set{\tid \mapsto\cmd_1 \seq \cmd_2} \satm (\pre,\cR_1 \cup \cR_2,\cG_1 \cup \cG_2,\post)}
\and 
\inferrule*[left=if]{
\set{\tid \mapsto \cmd_1} \satm (\pre \cap \sem{\exp},\cR_1,\cG_1,\post) \\
\set{\tid \mapsto \cmd_2} \satm (\pre \setminus \sem{\exp},\cR_2,\cG_2,\post)}
{\set{\tid \mapsto \ite{\exp}{\cmd_1}{\cmd_2}} \satm (\pre,\cR_1 \cup \cR_2 \cup \set{\pre},\cG_1 \cup \cG_2,\post)}
\and
\inferrule*[left=while]{
\pre \setminus \sem{\exp} \subseteq \post \\ 
\set{\tid \mapsto \cmd} \satm (\pre \cap \sem{\exp},\cR,\cG,\pre)}
{\set{\tid \mapsto \while{\exp}{\cmd}} \satm (\pre,\cR\cup \set{\pre,\post},\cG,\post)}
\and
 \inferrule*[left=par]{
\set{\tid_1 \mapsto\cmd_1} \satm (\pre_1,\cR_1,\cG_1, \post_1) \\
\set{\tid_2 \mapsto\cmd_2} \satm (\pre_2,\cR_2,\cG_2,\post_2) \\\\
\pre \subseteq \pre_1 \cap \pre_2 \\
\post_1 \cap \post_2 \subseteq \post \\
\tup{\cR_1,\cG_1} \text{ and } \tup{\cR_2,\cG_2} \text{ are non-interfering}}
{\set{\tid_1 \mapsto\cmd_1} \uplus \set{\tid_2\mapsto \cmd_2} \satm (\pre,\cR_1 \cup \cR_2 \cup \{\pre,\post\},\cG_1 \cup \cG_2 ,\post)} 
\and
 \inferrule*[left=fork-join]{
\M \vDash \mhoare{\pre}{{\tid}\mapsto{\forklab{\tid_1}{\tid_2}}}{\pre'} \\
\M \vDash \mhoare{\post'}{{\tid}\mapsto{\joinlab{\tid_1}{\tid_2}}}{\post} \\\\
\set{\tid_1 \mapsto\cmd_1} \uplus \set{\tid_2\mapsto \cmd_2} \satm (\pre',\cR,\cG,\post') \\\\
\cG' = \cG \cup \{ \gc{\pre}{\tid \mapsto \forkcmd{\tid_1}{\tid_2}}, \gc{\post'}{\tid \mapsto \joincmd{\tid_1}{\tid_2}}\}
}{\set{\tid \mapsto \cmd_1 \spar{\tid_1}{}{\tid_2} \cmd_2} \satm (\pre,\cR\cup \set{\pre,\post},\cG',\post)} 
\end{mathpar}
\caption{Generic sequential RG proof rules (letting $\sem{\exp} = \set{\tup{\regstore,\memstate} \st \regstore(\exp)=\btrue}$) }
\label{fig:proof-rules} 
\end{figure} 
}
 
\section{Generic Rely-Guarantee Reasoning}
\label{sec:genRG}
 
 In this section we present our generic RG framework.
Rather than committing to a specific assertion language,
our reasoning principles apply on the \emph{semantic level}, using sets of states instead of syntactic assertions. 
The structure of proofs still follows program structure, thereby retaining RG's compositionality. 
By doing so, we decouple the semantic insights of RG reasoning from a concrete syntax.
Next, we present proof  rules serving as blueprints for memory model specific proof systems. 
An instantiation of this blueprint requires lifting the semantic principles to syntactic ones. 
More specifically, it requires 
\begin{enumerate}
  \item a language with (a) concrete assertions for specifying sets of states and 
  (b) operators that match operations on sets of states (like $\land$ matches $\cap$); and
  \item sound Hoare triples for primitive commands. 
\end{enumerate} 
Thus, each instance of the framework (for a specific memory system)
is left with the task of identifying useful abstractions on states, 
as well as a suitable formalism, for making the generic semantic framework into a proof system. 

\paragraph{\textbf{RG judgments.}}
We let $\M$ be an arbitrary memory system and  $\allstates_\M \defeq \Regstores \times \M.\lQ$. 
Properties of programs $\cmdmap$ are stated via \emph{RG judgments}: 
\[ \cmdmap \satm ( \pre,\cR,\cG, \post) \]
where $\pre,\post \subseteq \allstates_\M$,  $\cR \subseteq \powerset{\allstates_\M}$,
and $\cG$ is a set of {\em guarded commands}, each of which takes the form
$\gc{\propg}{\tid \mapsto \alpha }$,
where $\propg \subseteq \allstates_\M$ and
$\alpha$ is either an  (instrumented) primitive command $\ipcmd$
or a fork/join label
(of the form $\forklab{\tid_1}{\tid_2}$ or $\joinlab{\tid_1}{\tid_2}$).
The latter is needed for considering the effect of forks and joins on the memory state.

\paragraph{\textbf{Interpretation of RG judgments.}}
RG judgments $\cmdmap \satm ( \pre,\cR,\cG, \post)$ state that
a terminating run of  $\cmdmap$ starting from a state in $\pre$,
under any concurrent context whose transitions preserve each of the sets of states in $\cR$,
will end in a state in $\post$ and perform only transitions contained in $\cG$. 
To formally define this statement, following the standard model for RG, these judgments are interpreted on {\em computations} of programs. 
Computations arise from runs of the concurrent system  (see \cref{def:system})  
by abstracting away from concrete transition labels
and including arbitrary ``environment transitions'' representing steps of the concurrent context.
We have: 
\begin{itemize}
\item \emph{Component} transitions of the form
$\tup{\cmdmap,\regstore,\memstate} \trans \compstep \tup{\cmdmap',\regstore',\memstate'}$. 
\item \emph{Memory} transitions, which correspond to internal memory steps 
(labeled with $\mlab \in \M.\lTheta$), of the form
$\tup{\cmdmap,\regstore,\memstate} \trans \memstep \tup{\cmdmap,\regstore,\memstate'}$. 
\item \emph{Environment} transitions of the form
$\tup{\cmdmap,\regstore,\memstate} \trans \envstep \tup{\cmdmap,\regstore',\memstate'}$.
\end{itemize}
Note that memory transitions do not occur in the classical RG presentation
(since \SC does not have internal memory actions).

A \emph{computation} 
is a (potentially infinite) sequence
$$\comp = \tup{\cmdmap_0,\regstore_0,\memstate_0} \trans {a_1}
\tup{\cmdmap_1,\regstore_1,\memstate_1} \trans {a_2} \ldots $$ with
$a_i \in \set{\compstep, \envstep, \memstep}$. We let 
$\tup{\cmdmap_{\last(\comp)},\regstore_{\last(\comp)},\memstate_{\last(\comp)}}$
denotes its last element, when $\comp$ is finite. 
We say that $\comp$  is a computation \emph{of a command pool $\cmdmap$} when $\cmdmap_0=\cmdmap$ and for every $i\geq 0$:
\begin{itemize}
\item If $a_i=\compstep$, then 
$\tup{\cmdmap_i,\regstore_i,\memstate_i} 
\asteptidlab{\tid}{\lab_\epsl}{\cs{\M}} 
\tup{\cmdmap_{i+1},\regstore_{i+1},\memstate_{i+1}}$
for some $\tid\in\Tid$ and $\lab_\epsl \in\Lab \cup \set{\epsl}$.  
\item If $a_i=\memstep$, then 
$\tup{\cmdmap_i,\regstore_i,\memstate_i} 
\asteplab{\mlab}{\cs{\M}} 
\tup{\cmdmap_{i+1},\regstore_{i+1},\memstate_{i+1}}$
for some $\mlab\in \M.\lTheta$.  
\end{itemize}
We denote by $\Comp(\cmdmap)$ the set of all computations of a command pool $\cmdmap$.


To define validity of RG judgments, we use the following definition.

\begin{definition}
Let $\comp=\tup{\cmdmap_0,\regstore_0,\memstate_0} \trans {a_1} \tup{\cmdmap_1,\regstore_1,\memstate_1} \trans {a_2} \ldots$
be a computation,
and $\cmdmap \satm ( \pre,\cR,\cG, \post)$ an RG-judgment.
\begin{itemize}
\item $\comp$ admits $\pre$ if $\tup{\regstore_0,\memstate_0}\in \pre$.
\item $\comp$ admits $\cR$ if 
$\tup{\regstore_i,\memstate_i} \in \statesetR \implies \tup{\regstore_{i+1},\memstate_{i+1}}\in \statesetR$
for every $\statesetR \in \cR$ and $i\geq 0$ with $a_{i+1}=\envstep$.
\item $\comp$ admits $\cG$ if for every $i\geq 0$ with $a_{i+1}=\compstep$ and $\tup{\regstore_i,\memstate_i} \neq \tup{\regstore_{i+1},\memstate_{i+1}}$ there exists $\gc{\prop}{\tid \mapsto \alpha} \in \cG$ such that $\tup{\regstore_i,\memstate_i} \in \prop$ and 
\begin{itemize}
\item if $\alpha=\ipcmd$ is an instrumented primitive command, then
  for some $\lab_\epsl\in\Lab \cup \set{\epsl}$, we have 
  $\tup{\set{\tid\mapsto \ipcmd},\regstore_i, \memstate_i}
  \asteptidlab{\tid}{\lab_\epsl}{\cs{\M}} \tup{\set{\tid\mapsto
      \skipc},\regstore_{i+1},\memstate_{i+1}}$
 
\item if $\alpha \in \set{\forklab{\tid_1}{\tid_2},\joinlab{\tid_1}{\tid_2}}$,
then $\memstate_i {\asteptidlab{\tid}{\alpha}{\M}} \memstate_{i+1}$ and $\regstore_i=\regstore_{i+1}$. 
\end{itemize} 
\item $\comp$ admits $\post$ if 
$\tup{\regstore_{\last(\comp)},\memstate_{\last(\comp)}}\in\post$
whenever $\comp$ is finite and 
$\cmdmap_{\last(\comp)}(\tid) = \skipc$ for every $\tid\in\dom{\cmdmap_{\last(\comp)}}$.
\end{itemize}
We denote by $\assume(\pre,\cR)$ the set of all computations that admit $\pre$ and $\cR$,
and by $\commit(\cG,\post)$ the set of all computations that admit $\cG$ and $\post$.
\end{definition}

Then, \emph{validity} of a judgment if defined as
\[ \models \cmdmap \satm (\pre,\cR,\cG, \post) \defiff \Comp(\cmdmap) \cap \assume(\pre,\cR) \subseteq \commit(\cG,\post)\] 

\figrgrules
 
\paragraph{\textbf{Memory triples.}}
Our proof rules build on {\em memory triples},
which specify pre- and postconditions for primitive commands
for a memory system $\M$. 

\begin{definition}
A \emph{memory triple for a memory system $\M$} is a tuple of the form $\mhoare{\pre}{\tid \mapsto \alpha}{\post}$,
where $\pre,\post\subseteq \allstates_\M$, $\tid\in\Tid$,
and $\alpha$ is either an instrumented primitive command, a fork label, or a join label.  
A memory triple for $\M$ is \emph{valid}, denoted by $\M \vDash \mhoare{\pre}{\tid \mapsto \alpha}{\post}$,
if the following hold for every $\tup{\regstore,\memstate}\in \pre$,
$\regstore'\in\Regstores$ and $\memstate'\in\M.\lQ$:
\begin{itemize}
\item if $\alpha$ is an instrumented primitive command
and $\tup{\set{\tid\mapsto \alpha},\regstore, \memstate} \asteptidlab{\tid}{\lab_\epsl}{\cs{\M}}  \tup{\set{\tid\mapsto \skipc},\regstore',\memstate'}$
for some $\lab_\epsl\in\Lab \cup \set{\epsl}$,
then $\tup{\regstore',\memstate'}\in \post$.
\item If $\alpha \in \set{\forklab{\tid_1}{\tid_2},\joinlab{\tid_1}{\tid_2}}$
and $\memstate {\asteptidlab{\tid}{\alpha}{\M}} \memstate'$,
then $\tup{\regstore,\memstate'}\in \post$.
\end{itemize}
\end{definition}

\begin{example} \label{example:mem-SC} 
For the memory system $\SC$ introduced in \cref{example:SC}, 
we have, \eg memory triples of the form $\SC \vDash \mhoare{\exp\subst{\reg}{\loc}}{\tid \mapsto \readInst{\reg}{\loc}}{\exp}$ 
(where $\exp\subst{\reg}{\loc}$ is the expression $\exp$ with all occurrences of $\reg$ replaced by $\loc$). 
\end{example}

\paragraph{\textbf{RG proof rules.}}
We aim at  proof rules deriving valid RG judgments.
%
\Cref{fig:proof-rules} lists (semantic) proof rules based on externally provided memory triples.  These rules
basically follows RG reasoning for sequential consistency. For example, rule {\sc seq}
states that RG judgments of commands $\cmd_1$ and $\cmd_2$ can be
combined when the postcondition of $\cmd_1$ and the precondition of
$\cmd_2$ agree, thereby uniting their relies and guarantees. Rule {\sc
  com} builds on memory triples.  The rule {\sc par} for parallel composition
combines judgments for two components when their relies and guarantees
are \emph{non-interfering}.  Intuitively speaking, this means that
each of the assertions that each thread relied on for establishing its
proof is preserved when applying any of the assignments collected in
the guarantee set of the other thread. An example of non-interfering
rely-guarantee pairs is given in step \ref{non-interfering} in
\cref{sec:motivation}.  Formally, non-interference is defined as
follows:
 
\begin{definition} 
\label{def:non-interfering}
Two rely-guarantee pairs $\tup{\cR_1,\cG_1}$ and $\tup{\cR_2,\cG_2}$
are {\em non-interfering} if
$\M \vDash \mhoare{\statesetR \cap \prop}{\tid \mapsto
  \alpha}{\statesetR}$ holds for every $\statesetR \in \cR_1$ and
$\gc{\prop}{\tid \mapsto \alpha} \in \cG_2$, and similarly for every
$\statesetR \in \cR_2$ and
$\gc{\prop}{\tid \mapsto \alpha} \in \cG_1$.
\end{definition} 

In turn, {\sc fork-join} combines the proof of a parallel composition
 with proofs of fork and join steps (which may also affect the memory state). 
Note that the guarantees also involve guarded commands with $\lFORK$ and $\lJOIN$ labels.

Additional rules for consequence and introduction of auxiliary
variables are elided here (they are similar to their \SC
counterparts), and provided in the appendix.

\paragraph{\textbf{Soundness.}}

To establish soundness of the above system we need an additional requirement
regarding the internal memory transitions
(for \SC this closure vacuously holds as there are no such transitions).
We require all relies in $\cR$ to be \emph{stable under internal memory transitions}, i.e.~for $\statesetR \in \cR$ we require 
\begin{align}
\forall \regstore,\memstate,\memstate', \mlab \in \M.\lTheta \ldotp 
\memstate \asteplab{\mlab}{\M} \memstate' 
\implies (\tup{\regstore,\memstate} \in \statesetR \implies \tup{\regstore,\memstate'} \in \statesetR) \label{int-rely} \tag{{\sf mem}}
\end{align} 
This condition is needed since the memory system can non-deterministically take 
its internal steps, and the component's proof has to be stable under such steps.

With this requirement, we are able to establish soundness.  The proof,
which generally follows~\cite{DBLP:journals/fac/XuRH97} is is given in
the appendix.  We write $\vdash \cmdmap \satm (\pre,\cR,\cG, \post)$
for provability of a judgment using the semantic rules presented
above.

\begin{restatable}[Soundness]{theorem}{RGsoundness} \label{thm:rg-sound}
\!\!\!${}\vdash \cmd \satm (\pre,\cR,\cG,\post) \; \Longrightarrow \; {}\vDash \cmd \satm (\pre,\cR,\cG,\post)$.
\end{restatable}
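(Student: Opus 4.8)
The plan is to prove the implication by induction on the derivation of $\vdash \cmd \satm (\pre,\cR,\cG,\post)$, with one case per proof rule of \cref{fig:proof-rules} (the elided consequence and auxiliary-variable rules are routine). Throughout I maintain the standing side condition that every rely set appearing in the derivation satisfies the memory-stability requirement \eqref{int-rely}; since all rules only enlarge rely sets by unions, each premise inherits this property from its conclusion. The key structural observation, following \cite{DBLP:journals/fac/XuRH97}, is that membership in $\commit(\cG,\post)$ decomposes into a \emph{stepwise} part (every state-changing component transition is justified by some guarded command of $\cG$) and a \emph{terminal} part (the final state lies in $\post$ once the pool has fully reduced to $\skipc$). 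The stepwise part is prefix-closed, which is what makes the compositional argument go through.

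For the base cases, \textsc{skip} is immediate: $\set{\tid \mapsto \skipc}$ performs no state-changing component step, so it admits $\emptyset$, and $\pre$ survives the environment because $\pre \in \cR$. For \textsc{com} I unfold a computation of $\set{\tid \mapsto \ipcmd}$ admitting $\pre$ and $\set{\pre,\post}$: the environment keeps the state in $\pre$ until the single component step fires, at which point validity of the memory triple $\M \vDash \mhoare{\pre}{\tid \mapsto \ipcmd}{\post}$ places the resulting state in $\post$, which the environment then preserves; the lone component step is witnessed by $\gc{\pre}{\tid \mapsto \ipcmd} \in \cG$. For the sequential rules \textsc{seq}, \textsc{if}, and \textsc{while} I use the operational semantics to cut a computation of the compound command into subcomputations of its constituents — for \textsc{seq} at the point control passes from $\cmd_1$ to $\cmd_2$ (where the intermediate state lies in $\mida$ by the first inductive hypothesis), for \textsc{if} right after the branch step (which only reads $\exp$, so stability of $\pre$ under the environment guarantees the chosen branch starts in $\pre\cap\sem{\exp}$ or $\pre\setminus\sem{\exp}$), and for \textsc{while} one loop iteration at a time. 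Infinite or unterminated runs carry no postcondition obligation, and guarantee admission follows by concatenating the stepwise guarantees of the pieces.

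The crux is the \textsc{par} case. Given a computation $\comp$ of $\set{\tid_1 \mapsto \cmd_1} \uplus \set{\tid_2 \mapsto \cmd_2}$ admitting $\pre$ and $\cR_1 \cup \cR_2 \cup \set{\pre,\post}$, I form its \emph{projections} $\comp_1$ and $\comp_2$: in $\comp_1$ the component steps of $\tid_1$ remain component steps, the component steps of $\tid_2$ together with the genuine environment steps become environment steps, and the memory steps remain memory steps (symmetrically for $\comp_2$). The obstacle is a circular dependency: to see that $\comp_1$ admits $\cR_1$ I must know that each $\tid_2$-step preserves every $\statesetR \in \cR_1$, which by \cref{def:non-interfering} follows once that step is justified by some $\gc{\prop}{\tid_2 \mapsto \alpha} \in \cG_2$; but that justification is exactly the guarantee admission of $\comp_2$, whose proof symmetrically needs $\comp_2$ to admit $\cR_2$, which depends back on $\cG_1$. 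I break this by a \emph{simultaneous} induction on the length of the prefix of $\comp$, establishing at each stage that both projections admit their relies and guarantees up to that point. The enabling lemma is a prefix-closure property of validity: any finite prefix can be completed to a full computation by appending stuttering environment steps — which preserve every rely — so the inductive hypotheses $\models \set{\tid_i \mapsto \cmd_i}\satm(\pre_i,\cR_i,\cG_i,\post_i)$ yield stepwise guarantee admission for each projection as soon as its rely has been admitted so far. Memory steps preserve $\cR_1$ and $\cR_2$ precisely by \eqref{int-rely}. Once both projections are shown to admit $\cG_1$ and $\cG_2$, $\comp$ admits $\cG_1 \cup \cG_2$; and if $\comp$ terminates then both projections terminate, placing the final state in $\post_1 \cap \post_2 \subseteq \post$.

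Finally, \textsc{fork-join} is assembled from the \textsc{par} case: a computation of $\set{\tid \mapsto \cmd_1 \spar{\tid_1}{}{\tid_2} \cmd_2}$ is cut at its fork and join steps, which are governed respectively by the memory triples $\M \vDash \mhoare{\pre}{\tid \mapsto \forklab{\tid_1}{\tid_2}}{\pre'}$ and $\M \vDash \mhoare{\post'}{\tid \mapsto \joinlab{\tid_1}{\tid_2}}{\post}$ (moving the state from $\pre$ into $\pre'$ and from $\post'$ into $\post$), while the middle segment is handled by the inductive hypothesis on the parallel pool with precondition $\pre'$ and postcondition $\post'$; the fork/join guarded commands in $\cG'$ witness those two component steps. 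I expect the \textsc{par} case — specifically, engineering the simultaneous prefix induction so that the mutual rely/guarantee dependency is discharged without circularity — to be the main obstacle, as it is in the classical soundness proof.
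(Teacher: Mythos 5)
Your proposal is correct and follows essentially the same route as the paper: structural induction over the rules, with the \textsc{par} case handled by projecting the computation onto each thread and using non-interference to turn the other thread's guarantee admission into rely preservation. The only cosmetic difference is that the paper discharges the mutual rely/guarantee dependency via a least-counterexample contradiction (its Lemma~\ref{lem:project}) rather than your simultaneous induction on prefix length, but these are the same argument in contrapositive form, and your stuttering-completion of prefixes is unnecessary since finite prefixes are already computations under the paper's definition.
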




\section{Potential-based Memory System for SRA}
\label{sec:potential}

In this section we present the potential-based semantics for Strong Release-Acquire (\SRA),
for which we develop a novel RG logic.
Our semantics is based on the one in 
\cite{DBLP:conf/pldi/LahavB20,DBLP:journals/toplas/LahavB22},
with certain adaptations to make it better suited for Hoare-style reasoning
(see \cref{rem:sra}).

In weak memory models, threads typically have different views of the shared memory. 
In \SRA, we refer to a memory snapshot that a thread may observe as a \emph{potential store}:

 
\begin{definition}
\label{def:pot-stores} 
A \emph{potential store} is a function $\potstore : \Loc \to \Val \times \set{\lR,\lU} \times \Tid$. 
We write $\lVALO(\potstore(\loc))$, 
$\lUFLAG(\potstore(\loc))$,
and $\lTID(\potstore(\loc))$ to retrieve the different components of $\potstore(\loc)$.
We denote by $\potstores$ the set of all potential stores.
\end{definition}

Having $\potstore(\loc)=\tup{\val,\lR,\tid}$ allows to read the value $\val$ from $\loc$
(and further ascribes that this read reads from a write performed by thread $\tid$, which is technically
needed to properly characterize the SRA model).
In turn, having $\potstore(\loc)=\tup{\val,\lU,\tid}$ further allows 
to perform an RMW instruction that atomically reads and modifies $\loc$.

Potential stores are collected in \emph{potential store lists} 
describing the values which can (potentially) be read and in what order.

 \begin{cnotation}
 {\rm
Lists over an alphabet $A$ are written as  $L = a_1 \cdottil a_n$
where $a_1 \til a_n\in A$.
 We also use $\cdot$ to concatenate lists,
 and write $L[i]$ for the $i$'th element of $L$
and $\size{L}$ for the length of $L$.  
}\end{cnotation}

A \emph{(potential) store list} is a finite sequence of potential
stores ascribing a possible sequence of stores that a thread can
observe, in the order it will observe them.  
The RMW-flags in these lists have to satisfy certain conditions: 
once the flag for a location is set, it remains set in the rest of the list;
and the flag must be set at the end of the list.
Formally, store lists are defined as follows.

\begin{definition}\label{def:pot-store-list} 
  A \emph{store list} $\L \in \LL$ is a non-empty finite sequence of
  potential stores with \emph{monotone RMW-flags} ending with an
  $\lU$, that is: for all $\loc \in \Loc$,
  \begin{enumerate}
  \item   if
    $\lUFLAG(\L[i](\loc))=\lU$, then $\lUFLAG(\L[j](\loc))=\lU$ for every
    $i < j \leq \size{L}$, and 
  \item $\lUFLAG(\L[\size{\L}](\loc))=\lU$.
  \end{enumerate}
\end{definition}

Now, SRA states ($\SRA.\lQ$) consist of \emph{potential mappings} that assign potentials to threads as defined next. 

\begin{definition}\label{def:sra-state} 
A \emph{potential} $\D$ is a non-empty set of potential store lists.
A \emph{potential mapping} is a function $\DD : \Tid \pfn \powerset{\LL} \setminus \set{\emptyset}$ 
that maps thread identifiers to potentials such that all lists agree on the very final potential store
(that is: $\L_1[\size{L_1}]=  \L_2[\size{L_2}]$ whenever $\L_1 \in \DD(\tid_1)$ and $\L_2 \in \DD(\tid_2)$).
\end{definition}

These potential mappings are ``lossy'' meaning that potential stores can be arbitrarily dropped. 
In particular, dropping the first store in a list enables reading from the second. 
This is formally done by transitioning from a state $\DD$ to a ``smaller'' state $\DD'$ as defined next.

\begin{definition}
\label{def:sqsuq}
The (overloaded) partial order $\sqsuq$ is defined as follows: 
\begin{enumerate}
\item on potential store lists: $\L' \sqsuq \L$ if $\L'$ is a 
nonempty subsequence of $\L$;  
\item on potentials: $\D' \sqsuq \D$ if 
$\forall \L'\in \D'.\; \exists \L\in\D.\; \L' \sqsuq \L$;  
\item on potential mappings:
$\DD' \sqsuq \DD$ if 
$\DD'(\tid)\sqsuq \DD(\tid)$
for every $\tid\in\dom{\DD}$.
\end{enumerate}
\end{definition}

\newcommand{\lz}[1]{{\color{green!60!black}{#1}}}
\newcommand{\lo}[1]{{\color{blue!60!black}{#1}}}

\begin{figure}[t] 
\begin{mathpar}
\small
\inferrule[write]{
\forall \L'\in\DD'(\tid). \; \exists \L \in\DD(\tid) \ldotp \L' = \L[\loc \mapsto \tup{\valw,\lU,\tid}] \\\\
\inarrT{
\forall \tida\in\dom{\DD} \setminus \set{\tid}, \L'\in\DD'(\tida) \ldotp \exists \lz{\L_0},  \lo{\L_1}  \ldotp \\
\qquad \lz{\L_0}  \cdot \lo{\L_1}  \in \DD(\tida) \land   \lo{\L_1}  \in \DD(\tid) \land {} \\
\qquad  \L' = \lz{\L_0}[\loc \mapsto \lR]   \cdot \lo{\L_1}[\loc \mapsto \tup{\valw,\lU,\tid}]}
}{{\DD} {\asteptidlab{\tid}{\wlab{}{\loc}{\valw}}{\SRA}}{\DD'}}
\and\inferrule[lose]{
\DD' \sqsuq \DD
}{\DD  \astep{\varepsilon}_{\SRA} {\DD'}}
\and
\inferrule[dup]{\DD \preceq \DD'}{\DD \astep{\varepsilon}_{\SRA} {\DD'}} 
\and
\inferrule[read]{
  \exists \tida \ldotp \forall \L \in \DD(\tid) \ldotp \inarrT{\lVALO(\L[1](\loc)) = \valr \land {} \\
    \lTID(\L[1](\loc)) = \tida}
}{{\DD} {\asteptidlab{\tid}{\rlab{}{\loc}{\valr}}{\SRA}}{\DD}}
\and
\inferrule[rmw]{
 \forall \L \in \DD(\tid) \ldotp \lUFLAG(\L[1](\loc)) = \lU \\\\
\DD {\asteptidlab{\tid}{\rlab{}{\loc}{\valr}}{\SRA}} {\DD} \\
\DD {\asteptidlab{\tid}{\wlab{}{\loc}{\valw}}{\SRA}} {\DD'}
}{{\DD} {\asteptidlab{\tid}{\ulab{}{\loc}{\valr}{\valw}}{\SRA}} {\DD'}}
\and
\inferrule[fork]{
\DD_{\text{new}} = \set{\tid_1 \mapsto \DD(\tid), \tid_2 \mapsto \DD(\tid)} \\\\
\DD' = \DD \rst{\dom{\DD}\setminus\set{\tid}} \uplus \DD_{\text{new}}
}{{\DD} {\asteptidlab{\tid}{\forklab{\tid_1}{\tid_2}}{\SRA}} \DD'}
\and 
\inferrule[join]{
\DD_{\text{new}} = \set{\tid \mapsto \DD(\tid_1) \cap \DD(\tid_2)} \\\\
\DD' = \DD \rst{\dom{\DD}\setminus\set{\tid_1,\tid_2}} \uplus \DD_{\text{new}}
}{{\DD} {\asteptidlab{\tid}{\joinlab{\tid_1}{\tid_2}}{\SRA}} \DD'}
\end{mathpar}
\caption{Steps of $\SRA$ (defining $\potstore[\loc \mapsto \tup{\val,\uflag,\tid}](\loca) = \tup{\val,\uflag,\tid}$ if $\loca = \loc$ and $\potstore(\loca)$ else,
and $\potstore[\loc \mapsto \lR]$ to set all RMW-flags for $\loc$ to $\lR$; both pointwise lifted to lists)}
\label{fig:SRA''-semantics} 
\end{figure}

We also define $\L \preceq \L'$  if $\L'$ is obtained from $\L$ by duplication of some stores (\eg 
$\potstore_1 \cdot \potstore_2 \cdot \potstore_3 \preceq \potstore_1 \cdot \potstore_2 \cdot \potstore_2 \cdot \potstore_3$). 
This is lifted to potential mappings as expected.

\Cref{fig:SRA''-semantics} defines the transitions of $\SRA$. 
The {\sc lose} and {\sc dup} steps account for losing and duplication in potentials. 
Note that these are both internal memory transitions (required to preserve relies as of (\ref{int-rely})). 
The {\sc fork} and {\sc join} steps distribute potentials on forked threads and join them at the end. 
The {\sc read} step obtains its value from the first store in the lists of the potential of the reader,
provided that all these lists agree on that value and the writer thread identifier. 
{\sc rmw} steps atomically perform a read and a write step
where the read is restricted to an $\lU$-marked entry.

Most of the complexity is left for the {\sc write} step.
It updates to the new written value for the writer thread $\tid$.
For every other thread, it updates a \emph{suffix} ($\lo{\L_1}$) of the store list with the new value.
For guaranteeing causal consistency this updated suffix cannot be arbitrary:
it has to be in the potential of the writer thread ($\lo{\L_1}  \in \DD(\tid)$).
This is the key to achieving the ``shared-memory causality principle'' of \cite{DBLP:journals/toplas/LahavB22},
which ensures causal consistency.

\newcommand{\up}[1]{{\color{red!70!black}{#1}}}
\begin{example}
Consider again the MP program from \cref{fig:mp-SRA}. 
After the initial fork step, threads $\ctid{1}$ and $\ctid{2}$
may have the following store list in their potentials:  
$$\small \clist{L} =   
\brkstore{\cloc{x}\mapsto \tup{0,\lU,\ctid{0}}}{\cloc{y} \mapsto \tup{0,\lU,\ctid{0}}}\cdot
\brkstore{\cloc{x}\mapsto \tup{0,\lU,\ctid{0}}}{\cloc{y} \mapsto \tup{0,\lU,\ctid{0}}}\cdot
\brkstore{\cloc{x}\mapsto \tup{0,\lU,\ctid{0}}}{\cloc{y} \mapsto \tup{0,\lU,\ctid{0}}}.$$
Then, $\writeInst{\cloc{x}}{1}$ by $\ctid{1}$ can generate the following store list for $\ctid{2}$:
$$\small \clist{L_2} =   
\brkstore{\cloc{x}\mapsto \tup{0,\up{\lR},\ctid{0}}}{\cloc{y} \mapsto \tup{0,\lU,\ctid{0}}}\cdot
\brkstore{\cloc{x}\mapsto \tup{\up{1},\up{\lU},\up{\ctid{1}}}}{\cloc{y} \mapsto \tup{0,\lU,\ctid{0}}}\cdot
\brkstore{\cloc{x}\mapsto \tup{\up{1},\up{\lU},\up{\ctid{1}}}}{\cloc{y} \mapsto \tup{0,\lU,\ctid{0}}}.$$
Thus $\ctid{2}$ keeps the possibility of reading the ``old'' value of $\cloc{x}$. 
For $\ctid{1}$ this is different: the model allows the writing thread to only see its new value of $\cloc{x}$
and all entries for $\cloc{x}$ in the store list are updated. 
Thus, for $\ctid{1}$ we obtain store list 
$$\small \clist{L_1} =   
\brkstore{\cloc{x}\mapsto \tup{\up{1},\up{\lU},\up{\ctid{1}}}}{\cloc{y} \mapsto \tup{0,\lU,\ctid{0}}}\cdot
\brkstore{\cloc{x}\mapsto \tup{\up{1},\up{\lU},\up{\ctid{1}}}}{\cloc{y} \mapsto \tup{0,\lU,\ctid{0}}}\cdot
\brkstore{\cloc{x}\mapsto \tup{\up{1},\up{\lU},\up{\ctid{1}}}}{\cloc{y} \mapsto \tup{0,\lU,\ctid{0}}}.$$
Next, when $\ctid{1}$ executes $\writeInst{\cloc{y}}{1}$,
again, the value for $\cloc{y}$ has to be updated to $1$ in  $\ctid{1}$ yielding 
$$\small \clist{L'_1} =   
\brkstore{\cloc{x}\mapsto \tup{1,\lU,\ctid{0}}}{\cloc{y} \mapsto \tup{\up{1},\up{\lU},\up{\ctid{1}}}}\cdot
\brkstore{\cloc{x}\mapsto \tup{1,\lU,\ctid{1}}}{\cloc{y} \mapsto \tup{\up{1},\up{\lU},\up{\ctid{1}}}}\cdot
\brkstore{\cloc{x}\mapsto \tup{1,\lU,\ctid{1}}}{\cloc{y} \mapsto \tup{\up{1},\up{\lU},\up{\ctid{1}}}}.$$
For $\ctid{2}$ the write step may change $\clist{L_2}$ to 
$$\small \clist{L_2'} =   
\brkstore{\cloc{x}\mapsto \tup{0,\lR,\ctid{0}}}{\cloc{y} \mapsto \tup{0,\up{\lR},\ctid{0}}}\cdot
\brkstore{\cloc{x}\mapsto \tup{1,\lU,\ctid{1}}}{\cloc{y} \mapsto \tup{0,\up{\lR},\ctid{0}}}\cdot
\brkstore{\cloc{x}\mapsto \tup{1,\lU,\ctid{1}}}{\cloc{y} \mapsto \tup{\up{1},\up{\lU},\up{\ctid{1}}}}.$$
Thus, thread $\ctid{2}$ can still see the old values,
or lose the prefix of its list and see the new values.
Importantly, it cannot read $1$ from $\cloc{y}$ {\em and then} $0$ from $\cloc{x}$. 
Note that $\writeInst{\cloc{y}}{1}$ by $\ctid{1}$ \emph{cannot} modify $\clist{L_2}$ to the list
$$\small \clist{L_2''} =   
\brkstore{\cloc{x}\mapsto \tup{0,\lR,\ctid{0}}}{\cloc{y} \mapsto{\tup{\up{1},\up{\lU},\up{\ctid{1}}}}}\cdot
\brkstore{\cloc{x}\mapsto \tup{1,\lU,\ctid{1}}}{\cloc{y} \mapsto \tup{\up{1},\up{\lU},\up{\ctid{1}}}}\cdot
\brkstore{\cloc{x}\mapsto \tup{1,\lU,\ctid{1}}}{\cloc{y} \mapsto \tup{\up{1},\up{\lU},\up{\ctid{1}}}},$$
as it requires $\ctid{1}$ to have $\clist{L_2}$ in its \emph{own potential}.
This models the intended semantics of message passing under causal consistency.
\end{example}


The next theorem establishes the equivalence of \SRA as defined above
and \textsf{opSRA} from \cite{DBLP:journals/toplas/LahavB22},
which is an (operational version of) the standard strong release-acquire declarative semantics~\cite{DBLP:conf/popl/LahavGV16,DBLP:journals/siglog/Lahav19}.
(As a corollary, we obtain the equivalence between the potential-based system from \cite{DBLP:journals/toplas/LahavB22}
and the variant we define in this paper.)

Our notion of equivalence employed in the theorem is {\em trace equivalence}. We let a trace of a memory system be a sequence of transition labels, ignoring $\epsl$ transitions, 
and consider traces of \SRA starting from an initial state $\lambda \tid \in \set{\ctid{1} \til \ctid{N}} \ldotp \set{ \tup{\lambda \loc \ldotp \tup{0,\lU,\ctid{0}}}}$ 
and traces of \textsf{opSRA} starting from the initial execution graph that consists of a write event to every location writing $0$ by a distinguished initialization thread $\ctid{0}$. 

\begin{theorem}
  A trace is generated by \SRA iff it is generated by \textsf{opSRA}. 
\end{theorem}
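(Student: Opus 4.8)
The plan is to establish both directions of the trace inclusion by exhibiting a simulation relation between states of $\SRA$ and states of $\textsf{opSRA}$ (execution graphs), and showing that it is preserved step-by-step. Since $\textsf{opSRA}$ records the \emph{past} (a partially ordered execution graph) while $\SRA$ records the \emph{future} (potential store lists describing what each thread may yet observe), the connecting relation cannot be a simple bijection; instead I would relate an execution graph $G$ to a potential mapping $\DD$ by saying that $\DD(\tid)$ consists of exactly those store lists that are \emph{consistent extensions} of $\tid$'s current view in $G$ --- i.e., each list in $\DD(\tid)$ enumerates, in some linearization respecting $G$'s happens-before and modification orders, a sequence of stores that $\tid$ could observe in the future given what it has already observed. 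The very-last-store agreement condition in \cref{def:sra-state} should correspond to all threads sharing a common ``maximal'' final write per location in $G$.

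First I would fix the two initial states and check that they are related: the initial potential mapping $\lambda\tid\ldotp\set{\tup{\lambda\loc\ldotp\tup{0,\lU,\ctid 0}}}$ should correspond exactly to the initial graph consisting of the initialization writes by $\ctid 0$. Then, for the forward direction ($\SRA \Rightarrow \textsf{opSRA}$), I would show that every $\SRA$ step (\textsc{write}, \textsc{read}, \textsc{rmw}, \textsc{fork}, \textsc{join}) with a non-$\epsl$ label can be matched by an $\textsf{opSRA}$ step carrying the same label, while the internal \textsc{lose} and \textsc{dup} steps (which emit $\epsl$ and are ignored in traces) must be shown to preserve the simulation relation without moving the graph. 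The crux here is the \textsc{write} rule: the side condition $\lo{\L_1}\in\DD(\tid)$ --- that the updated suffix of another thread's list must already lie in the writer's own potential --- is precisely the mechanism enforcing the shared-memory causality principle, and I would need to argue it corresponds exactly to the $\textsf{opSRA}$ constraint that a reader importing a write also imports its causal predecessors. For the backward direction ($\textsf{opSRA}\Rightarrow\SRA$), I would conversely show each graph-extending step can be mimicked in $\SRA$, inserting the requisite \textsc{lose}/\textsc{dup} steps to massage the potential into the right shape before applying the labeled step.

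The main obstacle I expect is the \textsc{write} case in both directions, and specifically reconciling the lossy/duplicating bookkeeping of the potential-based model with the monotone accumulation of events in the graph model. In the forward direction, translating the existential decomposition $\L' = \lz{\L_0}[\loc\mapsto\lR]\cdot\lo{\L_1}[\loc\mapsto\tup{\valw,\lU,\tid}]$ into a well-defined placement of the new write event in $G$'s modification order --- and verifying it satisfies acyclicity of $\hb\cup\mo\cup\rb$ --- is delicate. In the backward direction, the difficulty is the reverse: given a graph step, I must \emph{construct} suitable witnesses $\lz{\L_0},\lo{\L_1}$ and prove they are available after appropriate internal reshuffling, which is where the $\preceq$ (duplication) relation earns its keep. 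Because these invariants are intricate and error-prone by hand, I would lean on the fact, noted in \cref{rem:sra}, that the equivalence has been \emph{mechanized in Coq} (over 5K lines), and structure the informal argument to mirror the key lemmas of that development --- establishing the simulation relation, its preservation under each rule, and the initial-state correspondence --- rather than attempting a fully detailed pencil-and-paper proof.
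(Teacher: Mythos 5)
Your high-level plan matches what the paper actually does: the paper's entire proof consists of the remark that the equivalence is established ``by simulation arguments'' and is mechanized in Coq, so a simulation relation between execution graphs and potential mappings, checked rule by rule from related initial states, is indeed the right skeleton, and your identification of the \textsc{write} rule's side condition $\lo{\L_1}\in\DD(\tid)$ as the crux of the causality correspondence is on target.

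There is, however, one concrete divergence that I would treat as a gap rather than a stylistic difference. The paper states that the two inclusions are proved by a \emph{forward} simulation in one direction and a \emph{backward} simulation in the converse; your sketch describes step-by-step forward matching in both directions. This is not incidental. In the direction where you must reconstruct an \SRA{} run from an \textsf{opSRA} run, the \textsc{write} step of \SRA{} forces you to commit, at the moment of the write, to how every other thread's potential is truncated and updated---which amounts to deciding \emph{now} which future reads will see the old value and which will see the new one. The graph-based run only resolves those choices later, when the reads actually occur. A forward simulation cannot make that choice without look-ahead; your plan to ``insert the requisite \textsc{lose}/\textsc{dup} steps to massage the potential into the right shape before applying the labeled step'' presupposes you already know what shape is right, i.e.\ it presupposes knowledge of the remainder of the run. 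That is precisely the situation a backward simulation (equivalently, prophecy-style reasoning) is designed for, and it is why the paper cannot get away with forward simulation on both sides. Your proposed relation---mapping a graph $G$ to the set of \emph{all} consistent future observation sequences---also only works for one inclusion: reachable \SRA{} states are arbitrary lossy refinements of that maximal potential, so for the other inclusion the relation must be an inequality ($\DD$ dominated by the potential induced by $G$), and maintaining the right refinement through a write again requires future information. Naming which direction needs the backward argument and why would close the gap; otherwise the proposal is faithful to the paper's (admittedly terse) proof.
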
 


The proof is of this theorem is by simulation arguments (forward
simulation in one direction and backward for the converse).  It is
mechanized in Coq and available in~\cite{CAV-Artifact}.  The
mechanized proof does not consider fork and join steps, but they can
be straightforwardly added.




\newcommand{\lorafig}{
\begin{figure}[t] 
$\begin{array}{@{} l @{\quad }r l l l @{}}
\text{\em extended expressions} & 
& \eexp & ::=  & \exp \ALT \loc \ALT \lR(\loc) \ALT \eexp + \eexp \ALT 
                                                   \neg \eexp \ALT \eexp \land \eexp \ALT 
                                                   \ldots \\ 
\text{\em interval assertions} & 
& \inter & ::= & [\eexp] \ALT \inter \chop \inter \ALT \inter \wedge \inter \ALT \inter \lor \inter \\  
\text{\em assertions} & 
& \gassert, \postf& ::= & \potassert{\tid}{\inter} \ALT \exp \ALT \gassert \land \gassert \ALT \gassert \lor \gassert
 \end{array}$
\caption{Assertions of \lora} 
\label{fig:logic} 
\end{figure}}

\section{Program Logic} \label{sec:logic}

For the instantiation of our RG framework to \SRA, we next 
(1) introduce the assertions of the logic \lora and (2) specify memory triples for \lora. 
Our logic is inspired by {\em interval logics} like Moszkowski's ITL~\cite{DBLP:journals/corr/abs-1207-3816} or duration calculus~\cite{DBLP:journals/ipl/ChaochenHR91}. 

\paragraph{\textbf{Syntax and semantics.}}
\Cref{fig:logic} gives the grammar of \lora. 
We base it on \emph{extended expressions} which---besides registers---can also involve locations   as well as 
 expressions of the form $\lR(\loc)$ (to indicate RMW-flag   $\lR$).  
Extended expressions $\eexp$ can hold on entire {\em intervals} of a store list (denoted $[\eexp]$). Store lists can be split into intervals satisfying different interval expressions ($\inter_1 \chop \ldots \chop \inter_n$) using the ``$\chop$'' operator (called ``chop''). 
In turn, $\potassert{\tid}{\inter}$ means that all store lists in $\tid$'s potential satisfy $\inter$. 
For an assertion $\gassert$, we let $\fv(\gassert) \subseteq \Reg \cup \Loc \cup \Tid$ be the set of registers, locations and thread identifiers occurring in $\gassert$, and write $\lR(\loc) \in \gassert$ to indicate that the term $\lR(\loc)$ occurs in $\gassert$. 


As an example consider again  MP (\cref{fig:mp-SRA}). We would like to express that   $\ctid{2}$ upon seeing $\cloc{y}$ to be 1 cannot see the old value 0 of $\cloc{x}$ anymore. In \lora this is expressed as $\potassert{\ctid{2}}{[\cloc{y} \neq 1] \chop [\cloc{x} = 1]}$: the store lists of $\ctid{2}$ can be split into two  intervals (one possibly empty), the first satisfying $\cloc{y} \neq 1$ and the second $\cloc{x}=1$. 

Formally, an assertion $\gassert$ describes  register stores coupled with \SRA states: 

\lorafig

\begin{definition} 
Let 
$\regstore$ be a register store, 
$\potstore$ a potential store, 
$\L$ a store list, 
and $\DD$ a potential mapping.  
We let $\sem{\exp}_{\tup{\regstore,\potstore}} = \regstore(\exp)$,
$\sem{\loc}_{\tup{\regstore,\potstore}} = \potstore(\loc)$, 
and $\sem{\lR(\loc)}_{\tup{\regstore,\potstore}} = {\sf if}\
\lUFLAG(\potstore(\loc)) = \lR\ {\sf then} \ \btrue\ {\sf else}\ \bfalse$.  
The extension of this notation to any extended expression $\eexp$ is standard.
The validity of assertions in $\tup{\regstore,\DD}$, denoted by $\tup{\regstore,\DD} \models \gassert$, is defined as follows:
\begin{enumerate}[topsep=2pt]

\item $\tup{\regstore,\L} \models {[\eexp]}$ if $\sem{\eexp}_{\tup{\regstore,\potstore}} = \btrue$ for every $\potstore \in \L$. 

\item $\tup{\regstore,\L} \models {\inter_1 \chop \inter_2}$ if
$\tup{\regstore,\L_1}  \models \inter_1$
and
$\tup{\regstore,\L_2}  \models \inter_2$
for some (possibly empty)
  $\L_1$ and $\L_2$ such that $\L=\L_1 \cdot \L_2$. 

\item $\tup{\regstore,\L} \models {\inter_1 \land \inter_2}$ if 
   $\tup{\regstore,\L} \models \inter_1$ and 
  $\tup{\regstore,\L} \models \inter_2$  (similarly for $\lor$).

 \item $\tup{\regstore,\DD} \models \potassert{\tid}{\inter}$ if 
   $\tup{\regstore,\L} \models \inter$ for every $\L \in \DD(\tid)$. 

\item $\tup{\regstore,\DD} \models \exp$ if $\regstore(\exp)=\btrue$. 
 
         \item $\tup{\regstore, \DD} \models \gassert_1 \land \gassert_2$ if $\tup{\regstore, \DD} \models \gassert_1$ and $\tup{\regstore, \DD} \models \gassert_2$  (similarly for $\lor$). 
  
\end{enumerate}
\end{definition}

 Note that with $\land$ and $\lor$ as well as negation on expressions,\footnote{Negation just occurs on the level of simple expressions $\exp$ which is sufficient for calculating $\pre \setminus \sem{\exp}$ required in rules {\sc if} and {\sc while}.} the logic provides the operators on sets of states necessary for an instantiation of our RG framework.
Further, the requirements from \SRA states guarantee certain properties:
\begin{itemize}[noitemsep,topsep=2pt]
  \item For $\gassert_1 = \potassert{\tid}{[\eexp^\tid_1]\chop \ldots \chop [\eexp^\tid_n]}$ and $\gassert_2 = \potassert{\tida}{[\eexp^\tida_1]\chop \ldots \chop [\eexp^\tida_m]}$: if $\eexp^\tid_i \land \eexp^\tida_j \implies \false$ for all $1 \leq i\leq n$ and $1 \leq j \leq m$, then $\gassert_1 \land \gassert_2 \implies \false$  
  (follows from the fact that all lists in potentials are non-empty and agree on the last  store). 
  \item If $\tup{\regstore, \DD} \models \potassert{\tid}{[\lR(\loc)]\chop [\eexp]}$, then every list $\L \in \DD(\tid)$ contains a non-empty suffix satisfying $\eexp$ (since all lists have to end with RMW-flags set on). 
\end{itemize}

All assertions are preserved by   steps \textsc{lose} and \textsc{dup}. 
This stability  is required by our RG framework (condition (\ref{int-rely}))\footnote{Such stability requirements are also common to other reasoning techniques for weak memory models, \eg~\cite{DBLP:journals/tocl/DohertyDDW22}.}. 
Stability is achieved here because 
negations  occur on the level of (simple) expressions only 
(\eg we cannot have $\neg (\potassert{\tid}{[\loc = \val])}$, meaning that $\tid$ must have
a store in its potential whose value for $\loc$ is not $\val$, which would not be stable under {\sc lose}). 

\begin{proposition} 
If $\tup{\regstore, \DD} \models \gassert$ and 
$\DD  \astep{\varepsilon}_\SRA \DD'$,
then $\tup{\regstore, \DD'} \models \gassert$. 
\end{proposition}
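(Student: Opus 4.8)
The plan is to reduce the claim to two monotonicity lemmas about interval assertions and then finish by a structural induction on $\gassert$. First I would observe that a single $\varepsilon$-transition of $\SRA$ is either a \textsc{lose} step, giving $\DD' \sqsuq \DD$, or a \textsc{dup} step, giving $\DD \preceq \DD'$; by inspection of \cref{fig:SRA''-semantics} these are the only $\varepsilon$-labelled rules. Both relate $\DD$ and $\DD'$ pointwise on every thread while preserving the domain, and $\gassert$ constrains only the potentials of the threads it names, so it suffices to understand how $\sqsuq$ and $\preceq$ act on individual store lists.

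Accordingly, the heart of the argument is the following pair of lemmas, each proved by induction on the structure of $\inter$: (i) if $\tup{\regstore,\L} \models \inter$ and $\L'$ is a (possibly empty) subsequence of $\L$, then $\tup{\regstore,\L'} \models \inter$; and (ii) if $\tup{\regstore,\L} \models \inter$ and $\L \preceq \L'$, then $\tup{\regstore,\L'} \models \inter$. For the base case $[\eexp]$ both are immediate, since the stores of a subsequence (resp. of a duplication) are among (resp. are copies of) the stores of $\L$, all of which validate $\eexp$. The $\land$ and $\lor$ cases follow directly from the induction hypotheses. The interesting case is $\chop$: given a split $\L = \L_1 \cdot \L_2$ witnessing $\inter_1 \chop \inter_2$, a subsequence $\L'$ of $\L$ decomposes as $\L' = \L'_1 \cdot \L'_2$ with $\L'_1, \L'_2$ subsequences of $\L_1, \L_2$, so the induction hypothesis yields $\L'_1 \models \inter_1$ and $\L'_2 \models \inter_2$; the duplication case splits symmetrically, since duplicated copies of a store stay within its block.

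One point requiring care is that $\sqsuq$ on store lists is defined only for \emph{nonempty} subsequences, whereas the pieces $\L'_1, \L'_2$ produced in the $\chop$ case may be empty. I would resolve this by stating lemma (i) for arbitrary (possibly empty) subsequences, which is legitimate because the empty list vacuously satisfies \emph{every} interval assertion $\inter$; I would record this first by a one-line induction on $\inter$ (the $[\eexp]$ clause quantifies over no stores, and $\chop$, $\land$, $\lor$ propagate it upward). This is precisely where the absence of negation at the interval level is used, as flagged in the discussion preceding the proposition.

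With the two lemmas in hand, the proposition follows by induction on $\gassert$. The case $\gassert = \exp$ is trivial, since its truth depends only on $\regstore$, which $\varepsilon$-steps leave unchanged. For $\gassert = \potassert{\tid}{\inter}$: in the \textsc{lose} case each $\L' \in \DD'(\tid)$ is a subsequence of some $\L \in \DD(\tid)$ with $\L \models \inter$ by hypothesis, so lemma (i) gives $\L' \models \inter$; the \textsc{dup} case is identical via lemma (ii). The $\land$ and $\lor$ cases are immediate from the induction hypotheses. The main obstacle is the $\chop$ case of the two lemmas together with the empty-list subtlety above; the remainder is routine bookkeeping.
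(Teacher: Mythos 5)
Your argument is correct and is exactly the intended one: the paper states this proposition without an explicit proof, relying on the informal observation that negation is confined to simple expressions, and your structural induction on $\gassert$ together with the two monotonicity lemmas (interval satisfaction is preserved under taking possibly-empty subsequences and under duplication) is the natural formalization of that remark. Your handling of the $\chop$ case and the observation that the empty list vacuously satisfies every interval assertion are precisely the points that make the \textsc{lose} case go through.
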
 



 

\paragraph{\textbf{Memory triples.}} 
Assertions in \lora describe sets of states, thus can be used to
formulate memory triples.  \Cref{fig:mem-triples} gives the base triples
for the different primitive instructions.

 \begin{figure}[t]
   \centering
   \begin{small}
     \scalebox{0.95}{
\begin{tabular}[t]{|l|r@{}c@{}l|l@{}|}
  \hline
  \scalebox{0.9}{Assumption} & Pre  & Command  & Post & Reference \\
  \hline
              & $\assert{\gassert\subst{\reg}{\exp}}$ & $\tid \mapsto \assignInst{\reg}{\exp}$
                                & $\assert{\gassert}$ & {\sc Subst-asgn} \\
                                \hline  
  $\loc \notin \fv(\gassert)$ & $\assert{\gassert}$
                     & $\tid \mapsto  \wrtInst{\loc}{\exp}$ & $\assert{\gassert}$ & {\sc Stable-wr} \\
  $\reg \notin \fv(\gassert)$ & $\assert{\gassert}$
                     & $\tid \mapsto \readInst{\reg}{\loc}$ & $\assert{\gassert}$ & {\sc Stable-ld} \\
  $\tid \notin \fv(\gassert)$ & $\assert{\gassert}$ & $\tid \mapsto \forklab{\tid_1}{\tid_2}$ & $\assert{\gassert}$ & {\sc Stable-fork} \\
  $\tid \notin \fv(\gassert)$ & $\assert{\gassert}$
                     & $\tid \mapsto \joinlab{\tid_1}{\tid_2}$ & $\assert{\gassert}$ & {\sc Stable-join}\\
\hline
  
              & $\assert{\exp  \land \potassert{\tid}{\inter} }$ & $\tid \mapsto \forklab{\tid_1}{\tid_2}$ & $ \assert{\exp \land \potassert{\tid_1}{\inter} \land \potassert{\tid_2}{\inter}  }$ & {\sc Fork}
  \\
  
              & $\assert{\exp \land \potassert{\tid_1}{\inter}  \land  \potassert{\tid_2}{\inter} }$ & $\tid \mapsto \joinlab{\tid_1}{\tid_2}$ & $\assert{\exp \land \potassert{\tid}{\inter}  }$ & {\sc Join}\\
\hline

              & $\assert{\true}$ & $\tid \mapsto \wrtInst{\loc}{\exp}$ & $\assert{\potassert{\tid}{[ \loc=\exp]}}$  & {\sc Wr-own} \\

  
    $
  \lR(\loc) \notin \inter$             & $\assert{\potassert{\tida}{\inter } }$
                     & $\tid \mapsto \wrtInst{\loc}{\exp}$ & $\assert{\potassert{\tida}{\inarrT{(\inter \wedge [\lR(\loc)]) \chop [x = e]}}  }$  & {\sc Wr-other-1} \\
  $\inarr{\loc \notin \fv(\inter_\tid), \\
  \lR(\loc) \notin \inter}$             & $\assert{\potassert{\tid}{\inter_\tid} \land  \potassert{\tida}{\inter \chop \inter_\tid}}$ 
                    &  $\tid \mapsto \wrtInst{\loc}{\exp}$ & $\assert{\potassert{\tida}{\inter \chop \inter_\tid}}$ & {\sc Wr-other-2} \\
                    $\loc \notin \fv(\inter_\tid)
  $  & $\assert{\potassert{\tid}{\inter_\tid}}$ & ${\tid \mapsto \wrtInst{\loc}{\exp}}$ & $\assert{\potassert{\tida}{[\lR(\loc)] \chop \inter_\tid }}$ & {\sc Wr-other-3 }
                    \\
  \hline  
%
%

  $\loc \notin \fv(\inter)$ & $\assert{\potassert{\tid}{[\lR(\loc)] \chop  \inter}}$ & $\tid \mapsto \swapInst{\loc}{\exp}$ & $\assert{\potassert{\tid}{ \inter}}$  & {\sc Swap-skip} \\
  \hline
\end{tabular}}
\end{small}
\caption{Memory triples for \lora using
  $\wrtInstn \in \set{\swapInstn,\writeInstn}$ and assuming
  $\tid \neq \tida$}
\label{fig:mem-triples} 
\end{figure}
We see the standard \SC rule of assignment ({\sc Subst-asgn}) for registers followed by 
a number of stability rules detailing when assertions are not affected by instructions. 
Axioms {\sc Fork} and {\sc Join} describe the transfer of properties from forking thread to forked threads and back. 

The next four axioms in the table concern write instructions (either $\swapInstn$ or $\writeInstn$). 
They reflect the semantics of writing in \SRA: 
 (1) In the writer thread $\tid$ all stores in all lists get updated (axiom {\sc Wr-own}).  
Other threads $\tida$ will have 
(2) their lists being split into ``old'' values for $\loc$ with $\lR$ flag and the new value for $\loc$ ({\sc Wr-other-1}), 
(3) properties (expressed as $\inter_\tid$) of suffixes of lists being preserved when the writing thread satisfies the same properties ({\sc Wr-other-2}) 
and 
(4) their lists consisting of $\lR$-accesses to $\loc$ followed by properties of the writer ({\sc Wr-other-3}).  
The last axiom concerns $\swapInstn$ only: as it can only read from store entries marked as $\lU$ it discards intervals satisfying $[\lR(\loc)]$. 

\begin{example}
We employ the axioms for showing one proof step for MP, namely one pair in the non-interference check 
of the rely $\cR_2$ of $\ctid{2}$ with respect to the guarantees $\cG_1$ of $\ctid{1}$:
$$\srahoare{\potassert{\ctid{2}}{[\cloc{y} \neq 1] \chop [\cloc{x}=1]} \land \potassert{\ctid{1}}{[\cloc{x}=1]}}{\ctid{1}\mapsto \writeInst{\cloc{x}}{1}}{\potassert{\ctid{2}}{[\cloc{y} \neq 1] \chop [\cloc{x}=1]}}$$
By taking $\inter_\tid$ to be $[\cloc{x}=1]$, this is an instance of {\sc Wr-other-2}. 
\end{example}

 In addition to the axioms above, we use
a {\em shift} rule for load instructions:
$$\inferrule*[left=Ld-shift]{\srahoare{\potassert{\tid}{\inter}}{\tid \mapsto \readInst{\reg}{\loc}}{\postf} \\ \reg \notin \fv(\inter)}
     {\srahoare{\potassert{\tid}{[ (\exp \wedge \eexp)\subst{\reg}{\loc}] \chop \inter}}{\tid \mapsto \readInst{\reg}{\loc}}{(\exp \land \potassert{\tid}{[\eexp] ; I}) \lor \postf}}$$
A load instruction reads from the first store in the lists, however, if the list satisfying $[ (\exp \wedge \eexp)\subst{\reg}{\loc}]$ in $[ (\exp \wedge \eexp)\subst{\reg}{\loc}] \chop \inter$ is empty, it reads from a list satisfying $\inter$. 
The shift rule for $\readInstn$ puts this shifting to next stores  into a proof rule. 
Like the standard Hoare rule {\sc Subst-asgn}, {\sc Ld-shift} employs backward substitution. 
\begin{example}
We exemplify rule {\sc Ld-shift} on another proof step of example MP, one for local correctness of  $\ctid{2}$:
$$\small\srahoare{\potassert{\ctid{2}}{[\cloc{y} \neq 1]\chop[\cloc{x}=1]}}{\ctid{2}\mapsto \readInst{\creg{a}}{\cloc{y}}}{\creg{a} =1 \implies \potassert{\ctid{2}}{[\cloc{x}=1]}}$$
From axiom {\sc Stable-ld} we get
$\srahoare{\potassert{\ctid{2}}{ [\cloc{x}=1]}}{\ctid{2}\mapsto
  \readInst{\creg{a}}{\cloc{y}}}{ \potassert{\ctid{2}}{[\cloc{x}=1]}}$.   
We obtain
$\srahoare{\potassert{\ctid{2}}{[\cloc{y} \neq 1]\chop[\cloc{x}=1]}}{\ctid{2}\mapsto
  \readInst{\creg{a}}{\cloc{y}}}{\creg{a} \neq 1 \lor
  \potassert{\ctid{2}}{[\cloc{x}=1]}}$ using the former as premise for
{\sc Ld-shift}.
\end{example}

In addition, we include the standard conjunction, disjunction and consequence rules of Hoare logic. 
 For instrumented primitive commands we employ the following rule: 
$$\small\inferrule*[left=Instr]{\srahoare{\postf_0}{\tid \mapsto \pcmd}{\postf_1} \\ \srahoare{\postf_1}{\tid \mapsto \assignInst{\reg_1}{\exp_1}}{\postf_2}  \ldots \srahoare{\postf_{n-1}}{\tid \mapsto \assignInst{\reg_n}{\exp_n}}{\postf_n}}
{\srahoare{\postf_0  }{\tid \mapsto \tup{\pcmd,\assignInst{\tup{\reg_1 \til \reg_n}}{\tup{\exp_1 \til \exp_n}}}}{\postf_n}}$$



Finally, it can be shown that all triples derivable from axioms and rules are valid memory triples.  

\begin{lemma}
If a \lora memory triple is derivable, $\vdash_\lora \srahoare{\pref}{\tid \mapsto \alpha}{\postf}$, 
then $\SRA \vDash \mhoare{\set{\tup{\regstore,\DD} \st \tup{\regstore,\DD} \models \pref}}{\tid \mapsto \alpha}{\set {\tup{\regstore,\DD} \st \tup{\regstore,\DD} \models \postf}}$.
\end{lemma}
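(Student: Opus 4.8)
The plan is to proceed by induction on the derivation of $\vdash_\lora \srahoare{\pref}{\tid \mapsto \alpha}{\postf}$. Writing $\sem{\pref} \defeq \set{\tup{\regstore,\DD} \st \tup{\regstore,\DD} \models \pref}$, the goal in each case is $\SRA \vDash \mhoare{\sem{\pref}}{\tid \mapsto \alpha}{\sem{\postf}}$, i.e.\ that every $\SRA$-transition for $\tid \mapsto \alpha$ (as in \cref{fig:SRA''-semantics}) out of a state in $\sem{\pref}$ lands in $\sem{\postf}$. The compositional rules are routine: {\sc consequence} holds since $\pref \implies \pref'$ means $\sem{\pref} \subseteq \sem{\pref'}$; {\sc conjunction}/{\sc disjunction} hold since $\sem{\gassert_1 \land \gassert_2} = \sem{\gassert_1} \cap \sem{\gassert_2}$ and dually for $\lor$; {\sc Instr} follows by chaining the single triples through the register-assignment steps of \cref{fig:pcmdsem}, using that those steps do not touch $\DD$; and {\sc Ld-shift} follows from the {\sc read} step (which reads the value of the first store of every list and leaves $\DD$ unchanged) together with the premise triple supplied by the induction hypothesis, along the lines sketched after the rule. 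Thus the real content is to verify that each base axiom of \cref{fig:mem-triples} denotes a valid memory triple.

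For the base cases I would first isolate a few elementary facts about the assertion semantics, each by a straightforward induction on the structure of $\inter$: (i) the empty list satisfies every $\inter$ (the $[\eexp]$ clause is vacuous and $\chop$/$\land$/$\lor$ propagate), whence every $\inter$ is \emph{prefix-closed} (and, symmetrically, suffix-closed), i.e.\ $\L = \L_a \cdot \L_b \models \inter$ implies $\L_a \models \inter$; (ii) \emph{locality}: if $\loc \notin \fv(\inter)$, then changing the $\loc$-component of the stores of a list preserves $\models \inter$, and similarly for $\reg \notin \fv$ and $\tid \notin \fv$; and (iii) if $\lR(\loc) \notin \inter$, then resetting only the RMW-flag of $\loc$ to $\lR$ (leaving its value intact) preserves $\models \inter$. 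Facts (ii)--(iii) immediately discharge {\sc Subst-asgn} and the {\sc Stable-$\ast$} axioms, and {\sc Fork}/{\sc Join} follow since the {\sc fork}/{\sc join} steps merely copy, respectively intersect, the potentials of the involved threads, so $\potassert{\tid}{\inter}$ is transported as stated.

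The crux is the write axioms. Fix a {\sc write} transition ${\DD} {\asteptidlab{\tid}{\wlab{}{\loc}{\valw}}{\SRA}}{\DD'}$ with $\valw = \regstore(\exp)$. For {\sc Wr-own}, each $\L' \in \DD'(\tid)$ equals $\L[\loc \mapsto \tup{\valw,\lU,\tid}]$, so every store maps $\loc$ to $\valw$ and $\L' \models [\loc = \exp]$. For the other-thread axioms ($\tida \neq \tid$), each $\L' \in \DD'(\tida)$ decomposes as $\L_0[\loc \mapsto \lR] \cdot \L_1[\loc \mapsto \tup{\valw,\lU,\tid}]$ with $\L_0 \cdot \L_1 \in \DD(\tida)$ and $\L_1 \in \DD(\tid)$; the prefix carries only $\lR$-flags at $\loc$ (so $\models [\lR(\loc)]$) with $\loc$-values inherited from $\L_0$, and the suffix maps $\loc$ to $\valw$ (so $\models [\loc = \exp]$). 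For {\sc Wr-other-1} one splits $\L'$ at the $\L_0 \mid \L_1$ boundary, uses prefix-closure to get $\L_0 \models \inter$ from $\L_0 \cdot \L_1 \models \inter$, then fact (iii) to add $[\lR(\loc)]$. For {\sc Wr-other-2}/{\sc Wr-other-3} the suffix must instead satisfy $\inter_\tid$: here we use $\L_1 \in \DD(\tid)$ with the precondition $\potassert{\tid}{\inter_\tid}$ and locality ($\loc \notin \fv(\inter_\tid)$) to get $\L_1[\loc \mapsto \cdots] \models \inter_\tid$, plus prefix-closure for the $\inter$-prefix of {\sc Wr-other-2}. Finally, for {\sc Swap-skip} the enabling premise of the {\sc rmw} step forces $\lUFLAG(\L[1](\loc)) = \lU$ for every $\L \in \DD(\tid)$, which together with the precondition $\potassert{\tid}{[\lR(\loc)] \chop \inter}$ forces the $[\lR(\loc)]$-prefix to be empty, so each list already satisfies $\inter$; this is preserved by the ensuing write since $\loc \notin \fv(\inter)$.

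The main obstacle is precisely the other-thread write axioms, and within them {\sc Wr-other-2}. The difficulty is not the $\loc$-local value bookkeeping but aligning the $\chop$-structure of the postcondition with the prefix/suffix decomposition forced by the {\sc write} rule under its causal side condition $\L_1 \in \DD(\tid)$: the split witnessing $\L_0 \cdot \L_1 \models \inter \chop \inter_\tid$ in the precondition need not coincide with the $\L_0 \mid \L_1$ boundary, so one must case-split on whether it falls inside $\L_0$ or inside $\L_1$ and pick the matching split of $\L'$, invoking prefix-closure in the first case and $\potassert{\tid}{\inter_\tid}$ in the second. This case analysis, rather than any individual computation, is where the argument has to be made carefully.
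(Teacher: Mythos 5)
The paper itself states this lemma without proof (the appendix only proves \cref{thm:rg-sound} and the auxiliary-variable rule), so there is no official argument to compare against; I can only assess your proposal on its merits. Your strategy --- induction on the derivation, preliminary lemmas that every interval assertion is satisfied by the empty list and hence is prefix- and suffix-closed, locality of satisfaction in $\loc$, $\reg$, $\tid$ and in the RMW-flag, and then a direct check of each axiom of \cref{fig:mem-triples} against the transitions of \cref{fig:SRA''-semantics} --- is the natural and, as far as I can verify, correct one. You rightly single out \textsc{Wr-other-2} and the need to reconcile the $\chop$-witness of $\L_0\cdot\L_1\models\inter\chop\inter_\tid$ with the $\L_0\mid\L_1$ boundary imposed by the \textsc{write} step; the case split works, though your attribution of tools is slightly off: when the witness falls inside $\L_0$ you need locality of $\inter_\tid$ (to absorb the remainder of $\L_0$ together with $\L_1$ into the $\inter_\tid$-segment) rather than prefix-closure, whereas both prefix-closure and the writer's precondition $\potassert{\tid}{\inter_\tid}$ are needed when it falls inside $\L_1$. (Also, \textsc{Subst-asgn} is discharged by a substitution lemma rather than by your facts (ii)--(iii), but that is routine.)

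The one genuine soft spot is \textsc{Ld-shift}, which you dismiss as routine but which is arguably as delicate as \textsc{Wr-other-2}. To establish the disjunctive postcondition one needs a case split that is global over the potential $\DD(\tid)$, not per list: either every $\L\in\DD(\tid)$ already satisfies $\inter$ under $\regstore$, in which case the premise triple (available by the induction hypothesis) yields $\postf$; or some list forces a non-empty first $\chop$-segment, whose first store is the one the \textsc{read} step actually samples, and only then does $(\exp\wedge\eexp)\subst{\reg}{\loc}$ at that store transfer to $\regstore[\reg\mapsto\valr]\models\exp$. One must then still argue that $[\eexp]$ holds of the remaining stores of the first segment under the updated register store (their $\loc$-values need not equal $\valr$) and that $\inter$ is unaffected by the update of $\reg$; this goes through only when $\eexp$ and $\inter$ depend on $\reg$ solely through the substituted occurrences. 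Without spelling out this case analysis the induction step for \textsc{Ld-shift} does not close, so this is the part of your proposal that still needs to be written out.
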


%


\newcommand{\ltupbracket}{\langle\!\langle}
\newcommand{\rtupbracket}{\rangle\!\rangle}

\newcommand{\tuptwo}[3]{#1 \in \ltupbracket #2; #3\rtupbracket}
\newcommand{\tupthree}[4]{#1 \in \ltupbracket#2; #3; #4\rtupbracket}

\begin{figure}[t]
  \centering
  \noindent
  \begin{minipage}[b]{\columnwidth}
    \centering
      \noindent 
  \begin{tabular}[b]{@{}c@{}} 
    $\assert{\potassert{\ctid{0}}{[\cloc{x} \neq 2]}}$ \qquad\qquad {} \\ 
    $\begin{array}{@{}l@{\ }||@{\ }l}
       \begin{array}[t]{l}
         \textbf{Thread } \ctid{1} 
         \\
         
         \assert{\potassert{\ctid{1}}{I^\cloc{x}_0}} \\
         
         1: \writeInst{\cloc{x}}{1}; \\ 
         
         \assert{\potassert{\ctid{1}}{I^\cloc{x}_1}} \\

         2: \writeInst{\cloc{x}}{2} \\

         \assert{\true} \\

       \end{array}
     & 
         \begin{array}[t]{l}
           \textbf{Thread } \ctid{2} 
           \\
           
           \assert{\potassert{\ctid{2}}{\inarrT{I_{012}^\cloc{x}}}} \\

           3: \readInst{\creg{a}}{\cloc{x}}  ;
           \\
           
           \assert{\cloc{a} = 2 \implies  \potassert{\ctid{2}}{I^\cloc{x}_2}} \\
           
           4: \readInst{\creg{b}}{\cloc{x}}   \\
           
           \assert{\cloc{a} = 2 \implies \cloc{b} = 2} \\
           
         \end{array}
     \end{array}$ \\
    $\assert{\creg{a}=2 \implies \creg{b} \neq 1} $\qquad\qquad {} 
  \end{tabular}
   \caption{RRC for two threads (\aka CoRR0)}
   \label{fig:rrc2}
 \end{minipage}
 
 \bigskip
 \hfill 
 \begin{minipage}[b]{\columnwidth}
   \scalebox{0.95}{
   \begin{tabular}[b]{@{}c@{}} 
        $\assert{\potassert{\ctid{0}}{I^\cloc{x}_0}}$  {} \\ 
       \hspace*{-5pt}$\begin{array}{l@{\ }||@{\ }l||@{\ }l||@{\ }l}
          \begin{array}[t]{l}
            \textbf{Thread } \ctid{1} 
            \\
            
            \assert{\bigwedge_{i\in\{1,3,4\}}
 \potassert{\ctid{i}}{I^{\cloc{x}}_{02}}  \\ {} \wedge \creg{c} \neq 1} \\
            
            1: \writeInst{\cloc{x}}{1} \\

            \assert{\true} \\

          \end{array}
     & 
         \begin{array}[t]{l}
           \textbf{Thread } \ctid{2}
           \\
           \assert{\bigwedge_{i\in\{2,3,4\}} \potassert{\ctid{i}}{I^{\cloc{x}}_{01}}  \\ {} \wedge \creg{a} \neq 2} \\

         2: \writeInst{\cloc{x}}{2} \\
           
         \assert{\true} \\
           
         \end{array}
                        & 
        \begin{array}[t]{l}
          \textbf{Thread } \ctid{3} 
          \\
          
           \assert{\potassert{\ctid{3}}{(I^{\cloc{x}}_{012} \lor I^{\cloc{x}}_{021})}}
           \\
          
          3: \readInst{\creg{a}}{\cloc{x}}  ;
          \\
          
          \assert{\cloc{a} = 2 \implies 
          \potassert{\ctid{3}}{I^{\cloc{x}}_{21}}} \\
          
          4: \readInst{\creg{b}}{\cloc{x}}   \\
          
          \assert{\tup{\cloc{a}, \cloc{b}} = \tup{2, 1} \implies \ \ \\
          \hfill \potassert{\ctid{3}}{I^\cloc{x}_1}} \\

        \end{array}
          & 
          \begin{array}[t]{l}
           \textbf{Thread } \ctid{4}
           \\
           
           \assert{\potassert{\ctid{4}}{(I^{\cloc{x}}_{012} \lor I^{\cloc{x}}_{021})}}
           \\
           
           5: \readInst{\creg{c}}{\cloc{x}}  ;
           \\
           
           \assert{\cloc{c} = 1 \implies 
                   \potassert{\ctid{4}}{I^{\cloc{x}}_{12}}} \\
           
           6: \readInst{\creg{d}}{\cloc{x}}   \\
           
            \assert{\tup{\cloc{c}, \cloc{d}} = \tup{1, 2} \implies \ \ \\
            \hfill \potassert{\ctid{4}}{I^\cloc{x}_2}} \\
           
         \end{array}
     \end{array}$ \\
    $\assert{\tup{\creg{a},\creg{b}}=\tup{2,1} \implies\tup{\cloc{c},\cloc{d}} \neq \tup{1,2}} $ {} 
   \end{tabular}}
   \caption{RRC for four threads (\aka CoRR2)} 
   \label{fig:rrc4}
 \end{minipage} 

\end{figure}

\section{Examples}

We discuss examples verified in \lora.  Additional examples can be
found in the appendix.

\paragraph{\textbf{Coherence.}} 
We provide two coherence examples in
\cref{fig:rrc2,fig:rrc4}, using the notation
$I^\loc_{\val_1 \val_2 \dots \val_n} = [x=\val_1] \chop [\loc=\val_2] \chop \ldots \chop [\loc=\val_n]$.
\cref{fig:rrc2} enforces an ordering on writes to the shared location
$\cloc{x}$ on thread $\ctid{1}$. The postcondition guarantees that 
after reading the second write, thread $\ctid{2}$ cannot read from the first.
  \cref{fig:rrc4} is similar, but the
writes to $\cloc{x}$ occur on two different threads. The postcondition
of the program guarantees that the two different threads agree on the
order of the writes. In particular if one reading thread (here
$\ctid{3}$) sees the value $2$ then $1$, it is impossible for the
other reading thread (here $\ctid{4}$) to see $1$ then $2$.

Potential assertions provide a compact and intuitive
mechanism for reasoning, e.g., in \cref{fig:rrc2}, the precondition of
line 3 precisely expresses the order of values available to thread
$\ctid{2}$. This presents an improvement over view-based
assertions~\cite{DBLP:conf/ecoop/DalvandiDDW19}, which required a
separate set of assertions to encode write order.



\begin{figure}[t]
 \centering
  \begin{tabular}[b]{@{}c@{}} 
  $\begin{array}{@{}l@{\ \ }@{\ \ }l  
     }
    \begin{array}[t]{@{}l}
     \textbf{Thread } \ctid{1} 
     \\
      \assert{\neg \creg{a}_1 \wedge \neg \creg{a}_2 \wedge \creg{mx_1} = 0} \\
     \texttt{while } \neg \cloc{stop} \texttt{ do } 
     \  \assert{\neg \creg{a}_1 \wedge  (\neg  \creg{a}_2 \lor \potassert{\ctid{1}}{[ \lR(\cloc{turn}) ]\chop [ \cloc{flag}_2]} )} \\
         1: \quad \writeInst{\cloc{flag}_1}{\btrue}; 
     \assert{\neg  \creg{a}_1 \wedge \potassert{\ctid{1}}{[ \cloc{flag}_1]} \wedge {} (\neg  \creg{a}_2 \lor \potassert{\ctid{1}}{[ \lR(\cloc{turn}) ]\chop [\cloc{flag}_2]})}  
     \\
     2: \quad \langle \swapInst{\cloc{turn}}{2};  \assignInst{\creg{a}_1}{\btrue}
      \rangle; \\ 
     3: \quad \texttt{do} 
     \  \assert{ \creg{a}_1 \wedge (\neg  \creg{a}_2 
     \lor \potassert{\ctid{1}}{[ \cloc{flag}_2 \wedge \cloc{turn} \neq 1 ]} 
 \lor P)}  
\\
     4: \qquad \readInst{\creg{fl}_1}{\cloc{flag}_2}; 
      \assert{ \creg{a}_1 \wedge (\neg  \creg{a}_2 
     \lor ( \creg{fl}_1 \wedge \potassert{\ctid{1}}{[ \cloc{flag}_2 \wedge  \cloc{turn} \neq 1]}) \lor P)} 
\\
     5: \qquad \readInst{ \creg{tu}_1}{\cloc{turn}}; 
      \assert{ \creg{a}_1 \wedge (\neg  \creg{a}_2 
      \lor ( \creg{fl}_1 \wedge  \creg{tu}_1 \!\neq\! 1 \wedge \potassert{\ctid{1}}{[\cloc{flag}_2 \wedge \cloc{turn} \!\neq\! 1]}) \lor P)}
     \\
     6: \quad \texttt{until} \ \neg \creg{fl}_1 \vee  (\creg{tu}_1 = 1) ; 
      \assert{ \creg{a}_1 \wedge (\neg  \creg{a}_2 \lor P)}\\
     7: \quad \writeInst{\cloc{cs}}{\bot}; 
      \assert{ \creg{a}_1 \wedge (\neg  \creg{a}_2 \lor P)} \\
     8: \quad \writeInst{\cloc{cs}}{0}; 
      \assert{\potassert{\ctid{1}}{[ \cloc{cs}=0]} \wedge  \creg{a}_1 \wedge (\neg  \creg{a}_2 \lor P)} \\
     9: \quad \readInst{ \creg{mx_1}}{\cloc{cs}}; 
      \assert{ \creg{mx_1} = 0 \wedge   \creg{a}_1 \wedge (\neg  \creg{a}_2 \lor P)} \\
     10: \;\langle \writeInst{\cloc{flag}_1}{0};  \assignInst{\creg{a}_1}{\bfalse} \rangle  \\
     \assert{ \creg{mx_1} = 0 
      } \\
   \end{array}
   \end{array}$
  \end{tabular} 
 
 \caption{Peterson's algorithm, where $P = \potassert{\ctid{1}}{[\lR(\cloc{turn})]\chop[\cloc{flag}_2 \land \cloc{turn} =1]}$. Thread $\ctid{2}$ is symmetric and we assume a stopper thread
   $\ctid{3}$ that sets $\cloc{stop}$ to $\btrue$.}
 \label{fig:peterson}
 
 \end{figure}

\paragraph{\textbf{Peterson's algorithm.}}
\Cref{fig:peterson} shows Peterson's
algorithm for implementing mutual exclusion for two threads~\cite{DBLP:journals/ipl/Peterson81} together with \lora
assertions. We depict only the code of thread $\ctid{1}$. Thread
$\ctid{2}$ is symmetric. A third thread $\ctid{3}$ is assumed
 stopping the other two threads at an arbitrary point
in time.  We use $\texttt{do } \cmd \texttt{ until } \exp$ as a
shorthand for $\cmd \sep \texttt{while } \exp \texttt{ do } \cmd$.
For correctness under \SRA, all accesses to the shared variable $\cloc{turn}$ are via a
$\swapInstn$, which ensures that $\cloc{turn}$ behaves like an $\SC$ variable.

Correctness is encoded via registers
$\creg{mx_1}$ and $ \creg{mx_2}$ into which the contents of shared
variable $\cloc{cs}$ is loaded. Mutual exclusion should guarantee both
registers to be 0. Thus neither threads should ever be able to read
$\cloc{cs}$ to be $\bot$ (as stored in line 7). 
 The proof (like
the associated $\SC$ proof in~\cite{DBLP:series/txcs/AptBO09}) introduces
auxiliary variables $\cloc{a_1}$ and $\cloc{a_2}$. Variable
$\cloc{a}_i$ is initially $\bfalse$, set to $\btrue$ when a thread
$\ctid{i}$ has performed its swap, and back to $\bfalse$ when
$\ctid{i}$ completes.

Once again potentials provide convenient mechanisms for reasoning
about the interactions between the two threads. 
For example, the assertion
$\potassert{\ctid{1}}{[ \lR(\cloc{turn}) ]\chop [\cloc{flag}_2]}$ in
the precondition of line 2 encapsulates the idea that an RMW on
$\cloc{turn}$ (via $\swapInst{\cloc{turn}}{2}$) must read from a
  state in which $\cloc{flag}_2$ holds, allowing us to establish
  $\potassert{\ctid{1}}{[ \cloc{flag}_2 ]}$ as a postcondition (using
  the axiom {\sc Swap-skip}). We obtain disjunct
  $\potassert{\ctid{1}}{[ \cloc{flag}_2 \wedge \cloc{turn} \neq 1 ]}$
  after additionally applying {\sc Wr-own}.

\section{Discussion, Related and Future Work} 

Previous RG-like logics provided ad-hoc solutions for other concrete
memory models such as x86-TSO and
C/C++11~\cite{DBLP:conf/icalp/LahavV15,DBLP:journals/jar/DalvandiDDW22,DBLP:conf/esop/BilaDLRW22,DBLP:journals/pacmpl/RaadLV20,DBLP:conf/fm/WrightBD21,DBLP:conf/ecoop/DalvandiDDW19,DBLP:conf/vstte/Ridge10}.
These approaches established soundness of the proposed logic with an ad-hoc proof that couples 
together memory and thread transitions.
We believe that these logics can be formulated in our proposed general RG framework
(which will require extensions to other memory operations such as fences).

Moreover, Owicki-Gries logics for different fragments of the C11 memory
model~\cite{DBLP:journals/jar/DalvandiDDW22,DBLP:conf/fm/WrightBD21,DBLP:conf/ecoop/DalvandiDDW19}
used specialized assertions over the underlying view-based
semantics. These include {\em conditional-view assertion} (enabling
reasoning about MP), and {\em value-order} (enabling reasoning about
coherence). Both types of assertions are special cases of the
potential-based assertions of \lora.

Ridge~\cite{DBLP:conf/vstte/Ridge10} presents an RG reasoning
technique tailored to x86-TSO, treating the write buffers in TSO
architectures as threads whose steps have to preserve relies. This is
similar to our notion of stability of relies under internal memory
transitions. Ridge moreover allows to have memory-model specific
assertions (\eg on the contents of write buffers).

The OGRA logic~\cite{DBLP:conf/icalp/LahavV15} for Release-Acquire
(which is slightly weaker form of causal consistency compared to SRA studied in this paper)
takes a different approach, which cannot be directly handled in our framework.
It employs simple SC-like assertions at the price of having a non-standard
non-interference condition which require a stronger form of stability.

Coughlin \etal~\cite{DBLP:conf/fm/CoughlinWS21,Coughlin-nonMCA}
provide an RG reasoning technique for weak memory models with a
semantics defined in terms of {\em reordering relations} (on
instructions). They study both multicopy and non-multicopy atomic
architectures, but in all models, the rely-guarantee assertions are
interpreted over SC.


Schellhorn \etal~\cite{DBLP:journals/amai/SchellhornTEPR14} develop a
framework that extends ITL with a compositional interleaving operator,
enabling proof decomposition using RG rules. Each interval represents
a sequence of states, strictly alternating between program and
environment actions (which may be a skip action). This work is
radically different from ours since (1) their states are interpreted
using a standard SC semantics, and (2) their intervals represent an
{\em entire execution} of a command as well the interference from the
environment while executing that command.

Under SC, rely-guarantee was combined with separation logic~\cite{DBLP:phd/ethos/Vafeiadis08,DBLP:conf/concur/VafeiadisP07},
which allows the powerful synergy of reasoning using stable invariants (as in rely-guarantee)
and ownership transfer (as in concurrent separation logic).
It is interesting to study a combination of our RG framework with concurrent separation logics
for weak memory models, such as~\cite{DBLP:conf/oopsla/VafeiadisN13,DBLP:conf/esop/SvendsenPDLV18}.

Other works have studied the decidability of
verification for causal consistency models. In work preceding the potential-based SRA
model~\cite{DBLP:journals/toplas/LahavB22}, Abdulla \etal~\cite{DBLP:conf/pldi/AbdullaAAK19} show that verification under RA
is undecidable. In other work, Abdulla \etal~\cite{DBLP:conf/netys/AbdullaABKS22} show that the reachability
problem under TSO remains decidable for systems with dynamic thread
creation. Investigating this question under SRA is an interesting
topic for future work.

Finally, the spirit of our generic approach is similar to
Iris~\cite{DBLP:journals/jfp/JungKJBBD18},
Views~\cite{DBLP:conf/popl/Dinsdale-YoungBGPY13}, Ogre and Pythia~\cite{DBLP:conf/popl/AlglaveC17}, the work of Ponce de Le{\'{o}}n \etal~\cite{DBLP:conf/fmcad/LeonFHM18}, and recent axiomatic
characterizations of weak memory
reasoning~\cite{DBLP:journals/tocl/DohertyDDW22}, which all aim to provide a
{\em generic} framework that can be instantiated to underlying semantics. 

In the future we are interested in automating the reasoning in \lora, starting from automatically 
checking for validity of program derivations (using, \eg SMT solvers for specialised theories of sequences or strings~\cite{DBLP:conf/cade/ShengNRZDGPQBT22,DBLP:conf/cpp/KanLRS22}),
and, including, more ambitiously, synthesizing appropriate \lora invariants.



\bibliography{references}

\newpage
\appendix


\section{Auxiliary Variables and Rule of Consequence}

Here, we provide the necessary definitions and the rule for auxiliary variables. 

\begin{definition}
For a set $\regset \subseteq \Reg$, 
two register stores $\regstore$ and $\regstore'$ are \emph{$\regset$-equivalent}, 
denoted by $\regstore =_\regset \regstore'$, 
if $\regstore(\reg) = \regstore'(\reg)$ for every $\reg \in \regset$.
This equivalence is lifted to pairs of register stores and memory states 
by defining $\tup{\regstore,\memstate}=_\regset \tup{\regstore',\memstate'}$ if 
$\regstore =_\regset \regstore'$ and $\memstate = \memstate'$. 
\end{definition}

\begin{definition}
The removal of a set $\regset \subseteq \Reg$ from a multi-assignment 
$\assignInst{\vec{\reg}}{\vec{\exp}}$,
denoted by $\rem{\assignInst{\vec{\reg}}{\vec{\exp}}}{\regset}$, 
is the multi-assignment 
$\assignInst{\tup{\reg_{j_1}\til\reg_{j_m}}}{\tup{\exp_{j_1} \til \exp_{j_n}}}$
where $j_1 < \ldots < j_m$ is an enumeration of $\set{1 \leq j \leq n \st \reg_j\in \regset}$.
This removal is only defined if no register in $\regset$ occurs in 
$\tup{\exp_{j_1} \til \exp_{j_n}}$.
This notation is extended to commands in the expected way by applying 
removal inside the second component of instrumented primitive commands.
It is only defined if all registers in $\regset$ only occur
as instrumentation, \ie in the second component of instrumented primitive commands.
\end{definition}


\begin{definition}
A set $\prop \suq \allstates$ of states is \emph{independent of a set $\regset \subseteq \Reg$},
denoted by $\ind{\prop}{\regset}$,
if for every two states $\astate$ and $\astate' \in \allstates_\M$ with 
$\astate =_{\Reg \setminus \regset} \astate'$,
we have $\astate \in \prop \iff \astate' \in \prop$.
This notion is lifted to guarded commands by defining
$\ind{\gc{\prop}{\tid \mapsto c}}{\regset} \defiff \ind{\prop}{\regset}$,
and to sets of sets of states (or of  guarded commands)
by requiring that every element of the set is independent of $\regset$.
\end{definition}

\noindent The rule of auxiliary variables then takes the following form: 

\[ \inferrule[Aux]{
\set{\tid \mapsto \cmd'} \satm (\pre', \cR', \cG, \post) \\\\
\cmd = \rem{\cmd'}{\regset} \\\\
\ind{\pre}{\regset} \\ \ind{\cR}{\regset} \\ \ind{\cG}{\regset} \\ \ind{\post}{\regset} \\\\
\forall \astate \in \pre \ldotp \exists \astate' \in \pre' \ldotp \astate =_{\Reg \setminus \regset} \astate' \\\\
\inarr{ \forall \astate_1,\astate_2 \ldotp 
(\forall \statesetR\in\cR  \ldotp \astate_1\in \statesetR \implies \astate_2 \in \statesetR)
\implies \\
\qquad\qquad
\forall \astate_1' \ldotp \astate_1 =_{\Reg \setminus \regset} \astate_1' \implies 
\exists \sigma_2'\ldotp \astate_2 =_{\Reg \setminus \regset} \astate_2' \land
(\forall \statesetR'\in\cR'  \ldotp \astate_1'\in \statesetR' \implies \astate_2' \in \statesetR')}
}{\set{\tid \mapsto \cmd} \satm (\pre,\cR,\cG,\post)} 
\]

\noindent The proof rules also contain a rule of consequence: 

\[
\inferrule[consq]{
\set{\tid \mapsto \cmd} \satm (\pre',\cR', \cG', \post') \\
\pre \subseteq \pre' \\
\forall \statesetR'\in\cR', \astate \in \statesetR', \astate'\in \allstates \ldotp (\forall \statesetR \in \cR \ldotp \sigma \in \statesetR \implies \sigma' \in \statesetR) \implies  \astate'\in \statesetR'\\
\post' \subseteq \post\\\forall \gc{\prop'}{\tida \mapsto \ipcmd} \in \cG' \ldotp 
\exists \prop   \ldotp \prop' \subseteq \prop \land \gc{\prop}{\tida \mapsto \ipcmd} \in \cG}
{\set{\tid \mapsto \cmd} \satm (\pre,\cR, \cG, \post)}
\]

\section{Soundness of RG Proof Rules}

\RGsoundness*

\noindent 
\begin{proof} 
By induction on the structure of rules. Next, we consider the rules in Fig.~\ref{fig:proof-rules} plus the rule for Fork/Join and consequence. The rules for parallel composition and auxiliary variables are treated below. 
\begin{description}
   \item[Skip] Let $$\xi \in \Comp(\{ \tid \mapsto \skipc\}) \cap \assume(\pre,\{\pre\}).$$ We need to prove that $\xi \in \commit(\emptyset,\pre)$. \\
   The computation $\xi$ has $\memstep$ and $\envstep$ transitions only. By $(\ref{int-rely})$ and $\xi \in \assume(\pre,\{\pre\})$, both type of transitions preserve $\pre$. Hence, in case $\xi$ is finite, $\tup{\regstore_{\last(\comp)},\memstate_{\last(\comp)}}\in\pre$
 and furthermore 
$\cmdmap_{\last(\comp)}(\tid') = \skipc$ for every $\tid'\in\dom{\cmdmap_{\last(\comp)}}$ as $\dom{\cmdmap_{\last(\comp)}} = \{\tid\}$. $\xi$ also admits $\cG$ as $\cG = \emptyset$. 
\item[Instrumented primitive commands] Let
  $$
  \comp \in \Comp(\{ \tid \mapsto \ipcmd \}) \cap \assume(\pre,\{\pre,\post\}).
  $$
  We need to prove that $\xi \in \commit(\gc{\pre}{\tid \mapsto \ipcmd}, \post)$. \\
   By the operational semantics, $\xi$ has to take the form 
   \[ \tup{\{\tid \mapsto \ipcmd \},\regstore_0,\memstate_0} \trans {a_1} \tup{\{\tid \mapsto \ipcmd \} ,\regstore_1,\memstate_1} \trans {a_2} \ldots \trans {a_j} \tup{\{\tid \mapsto \skipc \},\regstore_j,\memstate_j}  \ldots \] 
    with $a_j = \compstep$, $a_i \in \{ \memstep, \envstep \}$ for all $i, 1 \leq i \leq j-1 \vee i \geq j+1$, and for all $m \geq j+1$ the command map is $\{\tid \mapsto \skipc \}$. By $\xi \in \assume(\pre,\{\pre,\post\})$ and (\ref{int-rely}), $\tup{\regstore_0, \memstate_0}$ and also $\tup{\regstore_{j-1}, \memstate_{j-1}} \in \pre$. By $\M \vDash \mhoare{\pre}{\tid \mapsto \ipcmd}{\post}$,  
    $\tup{\regstore_{j}, \memstate_{j}} \in \post$. By $\xi \in \assume(\pre,\{\pre,\post\})$, $\tup{\regstore_m,\memstate_m} \in \post$ for all $m \geq j$. 
    Hence, if $\xi$ is finite, $\tup{\regstore_{\last(\comp)},\memstate_{\last(\comp)}}\in\post$ and 
$\cmdmap_{\last(\comp)}(\tid) = \skipc$ for every $\tid\in\dom{\cmdmap_{\last(\comp)}}$. Finally, $\xi$ admits $\cG$ as $a_j$ is the only component step executing $\pcmd$ in a state satisfying $\pre$. 
   \item[Sequential composition] Let $\xi \in \Comp(\set{\tid \mapsto\cmd_1 \seq \cmd_2}) \cap \assume(\pre,\cR_1 \cup \cR_2)$. We need to prove that $\xi \in \commit(\cG_1 \cup \cG_2,\post)$.  
   The computation $\comp = \tup{\cmdmap_0,\regstore_0,\memstate_0} \trans {a_1} \tup{\cmdmap_1,\regstore_1,\memstate_1} \trans {a_2} \ldots$ takes one of two forms: \\
   (1) Either (the $\cmd_1$ part is infinite) 
    $\cmdmap_i = \{\tid \mapsto \cmd_1^i \seq \cmd_2\}$, $i \geq 0$,  and $\comp' = \tup{\{\tid \mapsto \cmd_1\},\regstore_0,\memstate_0} \trans {a_1} \tup{\{\tid \mapsto \cmd_1^1,\regstore_1,\memstate_1} \trans {a_2} \ldots$ is in $\Comp(\{ \tid \mapsto \cmd_1\}) \cap \assume(\pre,\cR_1)$ (as $\cR_1 \subseteq \cR_1 \cup \cR_2$), \\ 
   or (2) ($\cmd_1$ part finite, $\cmd_2$ part finite or infinite) 
  there exists $k$ s.t.~$\cmdmap_k = \{ \tid \mapsto \skipc \seq \cmd_2\}$, $\cmdmap_{k+1} = \{ \tid \mapsto   \cmd_2\}$ and $a_{k+1} = \epsl$. \\ 
   First case: By $\{\tid \mapsto \cmd_1\} \satm (\pre, \cR_1,\cG_1,mid)$, $\comp' \in \commit(\cG_1,mid)$. Hence $\comp' \in \commit(\cG_1 \cup \cG_2,mid)$, and as $\comp'$ and $\comp$ are infinite, $\comp \in \commit(\cG_1 \cup \cG_2,\post)$ (no final state). \\
   Second case: We split $\comp$ into $\comp_1$ and $\comp_2$ by letting $\comp_1$ run until state $k$, $\tup{\{ \tid \mapsto \skipc \seq \cmd_2\},\regstore_k, \memstate_k}$, and have $\comp_2$ start in state $k+1$, $\tup{\{ \tid \mapsto \cmd_2\},\regstore_{k+1}, \memstate_{k+1}}$. From $\comp_1$ we construct computation $\comp_1'$ like for case (1). 
   Again, $\comp_1' \in \Comp(\{ \tid \mapsto \cmd_1\}) \cap \assume(\pre,\cR_1)$. As $\comp_1'$ is finite and by $\{ \tid \mapsto\cmd_1 \} \satm (\pre,\cR_1,\cG_1,mid)$, $\comp_1' \in \commit(\cG_1, mid)$, hence $\tup{\regstore_k,\memstate_k} \in mid$ and so is $\tup{\regstore_{k+1},\memstate_{k+1}}$ (as the $\epsl$-step of $\skipc$ does not change registers nor memory). Furthermore, all component steps in $\comp_1'$ satisfy $\cG_1$ (and thus $\cG_1 \cup \cG_2$) and so do the same steps in $\comp$. 
   Then, $\comp_2 \in \Comp(\{ \tid \mapsto \cmd_2) \cap \assume(mid,\cR_2)$, hence by $\{ \tid \mapsto\cmd_2 \} \satm (mid, \cR_2,\cG_2,\post)$, $\comp_2 \in \commit(\cG_2,\post)$. Thus, if $\comp_2$ is finite,  $\tup{\regstore_{\last(\comp_2)},\memstate_{\last(\comp_2)}}\in\post$ and 
$\cmdmap_{\last(\comp)}(\tid) = \skipc$. Hence $\comp \in \commit(\cG_1 \cup \cG_2,\post)$.  
\item[If] Let $\comp \in \Comp(\set{\tid \mapsto \ite{\exp}{\cmd_1}{\cmd_2}}) \cap \assume(\pre,\cR_1 \cup \cR_2 \cup \set{\pre})$. Need to show $\comp \in \commit(\cG_1 \cup \cG_2,\post)$. Then there exists a $ k$ s.t.~$\cmdmap_k = \set{\tid \mapsto \ite{\exp}{\cmd_1}{\cmd_2}}$, $a_{k+1} = \epsl$  and for all $i \leq k$, step $a_i \neq \compstep$, hence $\tup{\regstore_k, \memstate_k} \in \pre$ (as $\pre$ is in the relies). Then two cases: (1) $\regstore_k \in \sem{\exp}$ or is not. The cases are dual and we just consider the first one. \\
In that case, $\cmdmap_{k+1} = \set{\tid \mapsto \cmd_1}$. Furthermore, $\tup{\regstore_{k+1},\memstate_{k+1}} \in \pre \cap \sem{\exp} \times \M.\lQ$ (the $\epsl$-step of if neither changing registers nor memory). All further component steps of $\comp$ now satisfy $\cG_1$ (by $\set{\tid \mapsto \cmd_1} \satm (\pre \cap (\sem{\exp} \times \M.\lQ), \cR_1, \cG_1, \post)$. Hence, if $\comp$ is finite,  $\tup{\regstore_{\last(\comp)},\memstate_{\last(\comp)}}\in\post$ and 
$\cmdmap_{\last(\comp)}(\tid) = \skipc$ and thus $\comp \in \commit(\cG_1,\post)$ and hence in $\commit(\cG_1 \cup \cG_2, \post)$. 
\item[While] Let $\comp \in \set{\tid \mapsto \while{\exp}{\cmd}} \cap \assume(\pre,\cR \cup \set{\pre,\post})$. We need to show $\comp \in \commit(\cG,\post)$. \\
We let $\cmdmap_w = \while{\exp}{\cmd}$, $\cmdmap_{if} = \ite{\exp}{(\cmd \sep \while{\exp}{\cmd})}{\skipc}$,  $\cmdmap_s = \set{\tid \mapsto \cmd \seq \while{\exp}{\cmd}}$ and $\cmdmap_{end} = \set{\tid \mapsto \skipc \seq \while{\exp}{\cmd}}$. By the operational semantics, the computation $\comp$ takes the following form or a prefix of it followed by non-component steps: 
\begin{align*}
 \tup{\cmdmap_w, \regstore_0,\memstate_0} \trans {} \ldots \trans {} \tup{\cmdmap_w, \regstore_{k_1},\memstate_{k_1}} \trans \compstep \tup{\cmdmap_{if},\regstore_{k_1+1},\memstate_{k_1+1}} \trans {} \ldots \trans {} \\ 
 \tup{\cmdmap_{if},\regstore_{m_1},\memstate_{m_1}} \trans \compstep \tup{\cmdmap_s,\regstore_{m_1+1}, \memstate_{m_1+1}} \trans {} \ldots \trans {} \tup{\cmdmap_{end},\regstore_{k_2 -1},\memstate_{k_2-1}} \\
 \trans {\compstep} \tup{\cmdmap_w,\regstore_{k_2},\memstate_{k_2}} \trans {} \ldots \trans {} \tup{\cmdmap_{if},\regstore_{m_2},\memstate_{m_2}} \trans \compstep \tup{\cmdmap_s,\regstore_{m_2+1}, \memstate_{m_2+1}}  \\  \trans {} \ldots \trans {} \tup{\cmdmap_w, \regstore_{k_r},\memstate_{k_r}} \trans \compstep \tup{\cmdmap_{if}, \regstore_{k_r+1},\memstate_{k_r+1}} \trans {} \ldots \trans {} 
 \end{align*} 
 which goes on like this forever, or eventually reaches $\tup{\cmdmap_{if},\regstore_{m_l},\memstate_{m_l}}$ with $\tup{\regstore_{m_l},\memstate_{m_l}} \notin\sem{\exp} \times \M.\lQ$ and $\cmdmap_{m_l+1} = \set{\tid \mapsto \skipc}$. Afterwards, the computation either has $\envstep$ or $\memstep$ steps forever, or $\comp$ is final.  By the operational semantics and $\set{\tid \mapsto \cmd} \satm (\pre \cap (\sem{\exp} \times \M.\lQ),\cR,\cG,\pre)$ we get the following properties:
 \begin{align*}
   \forall i, 1 \leq i < l: & \quad \regstore_{m_i}(e) = 1 \\
     \forall i, 1 \leq i < l: & \quad \tup{\regstore_{k_i}, \memstate_{k_i}} \in \pre \cap (\sem{\exp} \times \M.\lQ) \\ 
       & \quad \tup{\regstore_{m_l}, \memstate_{m_l}} \in \pre \setminus (\sem{\exp}\times \M.\lQ)
 \end{align*} 
 All components steps of $\comp$ are in $\cG$, either because they are steps of $\cmd$ or steps belonging to component steps labelled $\epsl$ which correspond to unfolding the while, evaluating the condition in if or moving from one part of a sequential composition to the next). By $\pre \setminus (\sem{\exp} \times \M.\lQ) \subseteq post$, $\tup{\regstore_{m_l},\memstate_{m_l}} \in \post$. Furthermore, by $\comp \in \assume(\pre,\cR \cup \set{\pre,\post})$, $\tup{\regstore_{\last(\comp)},\memstate_{\last(\comp)}}\in\post$ and 
$\cmdmap_{\last(\comp)}(\tid) = \skipc$. 
\item[Consequence] Let $\comp \in \Comp(\set{\tid \mapsto \cmd}) \cap \assume(\pre,\cR \cup \{\pre,\post\})$. We need to show $\comp \in \commit(\cG,\post)$. \\
As $\pre \subseteq \pre'$, $\comp \in \assume(\pre',\cR \cup \{\pre,\post\})$. \\
From $\comp$ admitting $\cR$, we get $\forall R \in \cR, \forall i \geq 0$ with $a_{i+1} = \envstep$: $\tup{\regstore_i,\memstate_i} \in R \implies \tup{\regstore_{i+1},\memstate_{i+1}} \in R$. Now let $R' \in \cR'$ and assume $\tup{\regstore_i,\memstate_i} \in R'$. By the condition on relies in the rule, we then get $\tup{\regstore_{i+1},\memstate_{i+1}} \in R'$. Hence $\comp \in \assume(\pre,\cR')$. \\
By $\set{\tid \mapsto \cmd} \satm (\pre',\cR', \cG', \post') $, we get $\comp \in \commit(\cG',\post')$. Thus for all $i\geq 0$ and $a_{i+1} = \compstep$, there exists $\gc{\prop'}{\tid \mapsto \ipcmd} \in  \cG'$ such that $\tup{\regstore_i,\memstate_i}\in \prop'$ and $\tup{\set{\tid\mapsto \ipcmd},\regstore_i, \memstate_i} \asteptidlab{\tid}{\lab_\epsl}{\cs{\M}}  \tup{\set{\tid\mapsto \skipc},\regstore_{i+1},\memstate_{i+1}}$
for some $\lab_\epsl\in\Lab \cup \set{\epsl}$ (and similar for Fork/Join steps).  
By the condition on guarantees there is thus some $\gc{\prop}{\tid \mapsto \ipcmd} \in  \cG$ such that $\tup{\regstore_i,\memstate_i}\in \prop$   and $\tup{\set{\tid\mapsto \ipcmd},\regstore_i, \memstate_i} \asteptidlab{\tid}{\lab_\epsl}{\cs{\M}}  \tup{\set{\tid\mapsto \skipc},\regstore_{i+1},\memstate_{i+1}}$
for some $\lab_\epsl\in\Lab \cup \set{\epsl}$. Thus $\comp$ admits $\cG$. 
The rest follows from $\post' \subseteq \post$. 

\item[Fork/Join] Let $\comp \in \Comp(\set{\tid \mapsto \cmd_1 \spar{\tid_1}{}{\tid_2} \cmd_2}) \cap \assume(\pre, \cR \cup \set{\pre,\post})$. We need to show that $\comp \in \commit(\cG,\post)$. For the computation $\comp$, assume there is an index $k$ such that $\cmdmap_k = \set{\tid \mapsto \cmd_1 \spar{\tid_1}{}{\tid_2} \cmd_2}$, $\cmdmap_{k+1} = \set{\tid \mapsto \cmd_1 \spar{\tid_1}{}{\tid_2} \cmd_2} \uplus \set{\tid_1 \mapsto\cmd_1} \uplus \set{\tid_2\mapsto \cmd_2}$ and the step $a_{k+1}$ belongs to the component transition $\forklab{\tid_1}{\tid_2}$. Moreover, all steps prior to $k+1$ are non-$\compstep$ steps. Then $\tup{\regstore_k,\memstate_k} \in \pre$ (by $\pre$ being contained in the rely) and by $$\M \vDash \mhoare{\pre}{{\tid}\mapsto{\forklab{\tid_1}{\tid_2}}}{\pre'}$$ we get 
$\tup{\regstore_{k+1},\memstate_{k+1}} \in \pre'$. Now two cases: (1) either there is some $m > k$ such that $\cmdmap_m = \set{\tid \mapsto \cmd_1 \spar{\tid_1}{}{\tid_2}} \uplus \set{\tid_1 \mapsto\skipc} \uplus \set{\tid_2\mapsto \skipc}$ and the step $a_{m+1} = \compstep$, namely that belonging to $\joinlab{\tid_1}{\tid_2}$, and then $\cmdmap_{m+1} = \set{\tid \mapsto \skipc}$,  or (2) no such step exists and thus $\comp$ is infinite. \\
For case (1), $\tup{\regstore_m, \memstate_m} \in \post'$ and hence by $\M \vDash \mhoare{\post'}{{\tid}\mapsto{\joinlab{\tid_1}{\tid_2}}}{\post}$, we get $\tup{\regstore_{m+1}, \memstate_{m+1}} \in \post$ (and so are all further states because $\post$ is in the rely). 
In both cases, all component steps fulfill $\cG$ because they are either component steps of $\cmd_1$ or $\cmd_2$ or fork or join. Hence, $\comp \in \commit(\cG \cup \{\tid \mapsto \forkcmd{\tid_1}{\tid_2}, \tid \mapsto \joincmd{\tid_1}{\tid_2}\},\post)$. 
\end{description} 
\end{proof}

\paragraph{Soundness of parallel composition.} 
For parallel composition, we take a computation $\comp$ and make a projection onto the threads $\tid_1$ and $\tid_2$ (plus their forked threads). We let $T_i \subseteq \Tid$ contain $\tid_i$, $i=1,2$, plus threads (recursively) forked by it. We construct computations $\comp_1, \comp_2$ by replacing in $\comp = \tup{\cmdmap_0,\regstore_0,\memstate_0} \trans {a_1} \tup{\cmdmap_1,\regstore_1,\memstate_1} \trans {a_2} \ldots$ the command maps $\cmdmap_k$ by $\cmdmap_k|_{T_i}, i=1,2$ (making $\dom{\cmdmap_k} = T_i$).   All $\compstep$-steps of $T_2$ become $\envstep$-steps in $\comp_1$, and vice versa, other steps stay as they are. We get 
\[ \comp_i \in \Comp(\set{\tid_i \mapsto \cmd_i}), i=1,2 \] 

\noindent Then the following lemma is required for later showing soundness of the rule of parallel composition (to avoid circular reasoning). It is analogous to one of Xu et al.~\cite{DBLP:journals/fac/XuRH97}.   

\begin{lemma} \label{lem:project} 
 Let $\comp \in \Comp( \set{\tid_1 \mapsto\cmd_1} \uplus \set{\tid_2\mapsto \cmd_2}) \cap \assume(\pre,\cR_1 \cup \cR_2 \cup \{\pre,\post\})$ and $\comp_1, \comp_2$ be the projections onto threads $\tid_1, \tid_2$, respectively. Assume furthermore $\set{\tid_i \mapsto\cmd_i} \satm (\pre_i,\cR_i,\cG_i, \post_i)$,  $(\pre_1,\cR_1,\cG_1, \post_1)$  and $ (\pre_2,\cR_2,\cG_2,\post_2)$ to be non-interfering  and $
\pre \subseteq \pre_1 \cap \pre_2$. Then $\comp_1$ admits $\cG_1$ and $\comp_2$ admits $\cG_2$. 
\end{lemma}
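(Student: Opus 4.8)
The plan is to prove the lemma by induction on the length of the prefix of $\comp$, which is the standard device (as in~\cite{DBLP:journals/fac/XuRH97}) for breaking the apparent circularity between the two relies and guarantees. Write $\astate_i \defeq \tup{\regstore_i,\memstate_i}$ for the register-and-memory part of the $i$-th state of $\comp$, shared by $\comp$, $\comp_1$ and $\comp_2$ since projection only restricts the command pool. For $k \in \set{1,2}$, say that $\comp_k$ \emph{admits $\cG_k$ up to $n$} if the guard condition from the definition of ``admits $\cG$'' holds for every component step of $\comp_k$ at an index $i < n$, and analogously that $\comp_k$ \emph{admits $\cR_k$ up to $n$} for its environment steps. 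Since ``admits $\cG_k$'' is a conjunction of conditions indexed by $i \ge 0$, it suffices to establish, for all $n$, the statement $P(n)$: $\comp_1$ admits $\cG_1$ up to $n$ and $\comp_2$ admits $\cG_2$ up to $n$.

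The key auxiliary step, which lets me feed prefix-level rely information into the whole-computation validity judgments $\set{\tid_k \mapsto \cmd_k} \satm (\pre_k,\cR_k,\cG_k,\post_k)$ (read as $\models$, as supplied by the outer induction on the derivation), is a \emph{localization} observation: if $\comp_k$ admits $\cR_k$ up to $n$, then $\comp_k$ admits $\cG_k$ up to $n$. To see this, truncate $\comp_k$ after its $n$-th state and append infinitely many stuttering environment steps (leaving $\regstore$ and $\memstate$ unchanged), obtaining $\hat\comp_k$. Then $\hat\comp_k \in \Comp(\set{\tid_k \mapsto \cmd_k})$ (projections are computations, and appending environment steps is always legal), its initial state lies in $\pre \subseteq \pre_k$, and all its environment steps preserve each $R \in \cR_k$ (those in the prefix by assumption, the appended stutters trivially). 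Hence $\hat\comp_k \in \assume(\pre_k,\cR_k)$, so validity gives $\hat\comp_k \in \commit(\cG_k,\post_k)$, and in particular $\hat\comp_k$—and therefore its prefix—admits $\cG_k$ up to $n$.

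I then carry out the induction on $n$. The base case $P(0)$ is vacuous. For the step, assume $P(n)$ and inspect the transition $a_{n+1}$ of $\comp$. If it is a genuine environment step or a memory step, it induces no component step in either projection, so $P(n{+}1)$ coincides with $P(n)$. Suppose instead it is a component step of a thread in $T_1$. Then $\comp_2$ gains no component step, so it still admits $\cG_2$ up to $n{+}1$ by the induction hypothesis; and since this transition is a component (not environment) step of $\comp_1$, ``$\comp_1$ admits $\cR_1$ up to $n{+}1$'' is the same statement as ``$\comp_1$ admits $\cR_1$ up to $n$''. I establish the latter from $P(n)$: each environment step of $\comp_1$ at an index $i < n$ is either a genuine environment step of $\comp$—preserving every $R \in \cR_1$ because $\comp \in \assume(\pre,\cR_1 \cup \cR_2 \cup \set{\pre,\post})$—or a $T_2$-component step, which by ``$\comp_2$ admits $\cG_2$ up to $n$'' is witnessed by some $\gc{\prop}{\tid\mapsto\alpha} \in \cG_2$ with $\astate_i \in \prop$; non-interference then supplies $\M \vDash \mhoare{R \cap \prop}{\tid\mapsto\alpha}{R}$, so the step sends $R$ into $R$ (trivially when it leaves the state unchanged, and by this memory triple otherwise). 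Thus $\comp_1$ admits $\cR_1$ up to $n{+}1$, and the localization observation yields ``$\comp_1$ admits $\cG_1$ up to $n{+}1$''. The case of a $T_2$-component step is symmetric, using the other clause of non-interference, and the whole case analysis applies uniformly whether $\alpha$ is an instrumented primitive command or a fork/join label, since both are covered by the guard conditions and by valid memory triples.

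The hard part is exactly the circular dependency flagged in the text: showing that $\comp_1$'s environment respects $\cR_1$ needs that $\comp_2$'s component steps obey $\cG_2$, and vice versa. The stepwise induction breaks it because every transition of $\comp$ is attributable to at most one of $T_1,T_2$, so proving one thread's guarantee for the new step at horizon $n{+}1$ consumes only the \emph{other} thread's guarantee over the strictly shorter prefix already furnished by the induction hypothesis; the stuttering-extension trick is what lets this prefix-level rely fact be pushed through the whole-computation validity judgments without assuming the conclusion.
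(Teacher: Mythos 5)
Your proof is correct and is essentially the paper's argument in a different guise: the paper phrases the same prefix-based breaking of the rely/guarantee circularity as a minimal-counterexample contradiction (taking the least index $j$ at which some projection's component step is unjustified, and feeding the truncated projection $\comp_1^{j+1}$ into $\assume(\pre_1,\cR_1)$ via non-interference), whereas you run the equivalent induction on prefix length and pad the truncation with stuttering environment steps instead of using the finite prefix directly. Your explicit handling of state-preserving component steps and of fork/join labels only makes precise details the paper's proof leaves implicit.
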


\begin{proof}
 First of all, $\comp_i$ admits $\pre_i$ by $\pre \subseteq \pre_1 \cap \pre_2$ . Now proof by contradiction. Assume $j$ to be the smallest index such that for the transition $\trans {a_{j+1} = \compstep}$ in some $\comp_i$ there is no $\gc{p}{\tid_i \mapsto \ipcmd} \in \cG_i$ such that $\tup{\regstore_j,\memstate_j} \in p$ and $\tup{\regstore_j,\memstate_j} \trans {c} \tup{\regstore_{j+1},\memstate_{j+1}}$. Without loss of generality assume this to be thread $i=1$. We now consider the computation $\comp_1^{j+1}$  which is $\comp_1$ up to $\tup{\cmdmap_1^{j+1},\regstore_{j+1},\memstate_{j+1}}$. $\comp_1^{j+1} \in \Comp(\set{\tid_1 \mapsto \cmd_1})$. We now consider the $\envstep$-steps in $\comp_1^{j+1}$. These are either $\envstep$-steps in $\comp$ or $\compstep$-steps in $\comp_2$. In the first case, they preserve all $R \in \cR_1 \cup \cR_2$. In the latter case, there exists some $\gc{p'}{\tid_2 \mapsto \ipcmd'} \in \cG_2$ which justifies the step (if not, we would have found a smaller index). By non-interference, this step preserves all $R \in \cR_1$. Hence, $\comp_1^{j+1} \in \assume(\pre_1, \cR_1)$. By $\set{\tid_1 \mapsto\cmd_1} \satm (\pre_1,\cR_1,\cG_1, \post_1)$, we thus get $\comp_1^{j+1} \in \commit(\cG_1,\post_1)$ which gives the contradiction. 
 \end{proof} 
 
\noindent With this lemma at hand, we can show soundness of the rule for parallel composition. 
 \begin{proof}
 Let $\comp \in \Comp(\set{\tid_1 \mapsto\cmd_1} \uplus \set{\tid_2\mapsto \cmd_2}) \cap \assume(\pre,\cR_1 \cup \cR_2 \cup \{ \pre,\post\})$. By Lemma~\ref{lem:project} and its proof, we get $\comp_i \in \assume(\pre_i, \cR_i)$ for the two projections $\comp_1$ and $\comp_2$ of $\comp$. Hence, $\comp_i \in \commit(\cG_i, \post_i), i=1,2$. Furthermore, if $\comp$ is finite, so are $\comp_1$ and $\comp_2$. Then $\tup{\regstore_{\last(\comp_i)},\memstate_{\last(\comp)}}\in\post_i$ and 
$\cmdmap_{\last(\comp_i)}(\tid_i) = \skipc$, $i=1,2$. By $\post_1 \cap \post_2 \subseteq \post$, this final state also satisfies $\post$.  
 \end{proof} 
 
 \paragraph{Soundness of auxiliary variables.}
 We start by showing  a proposition about transitions to not be affected by auxiliary variables.  
 
 \begin{proposition} \label{prop:aux} 
   Let $\regset \subseteq \Reg$ be a set of auxiliary registers, $\regstore_1, \regstore_2$ register stores with $\regstore_1 =_{\Reg \setminus Z} \regstore_2$ and let  $\tup{\set{\tid \mapsto \rem{\ipcmd}{\regset}},\regstore_1,\memstate} \trans {\tid,\lab} \tup{\set{\tid \mapsto \skipc},\regstore_1',\memstate'}$. Then for all $\regstore_2'$ with  $\tup{\set{\tid \mapsto \ipcmd},\regstore_2,\memstate} \trans {\tid,\lab} \tup{\set{\tid \mapsto \skipc},\regstore_2',\memstate'}$ we have $\regstore_1' =_{\Reg \setminus Z} \regstore_2'$.  
   
 \end{proposition}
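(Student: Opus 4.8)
The plan is to split the joint transition into its primitive part and its trailing multi-assignment, and to propagate the relation $=_{\Reg\setminus\regset}$ through each. Write $\ipcmd = \tup{\pcmd,\assignInst{\tup{\reg_1 \til \reg_n}}{\tup{\exp_1 \til \exp_n}}}$. By definition of removal, $\rem{\ipcmd}{\regset}$ keeps $\pcmd$ unchanged and deletes exactly the assignments whose target lies in $\regset$, retaining the sublist indexed by $\set{j \st \reg_j \notin \regset} = \set{j_1 < \ldots < j_m}$; its being defined guarantees that no register of $\regset$ occurs in the surviving expressions $\exp_{j_1} \til \exp_{j_m}$, and the convention that auxiliary registers occur only as instrumentation guarantees that $\pcmd$ itself mentions no register of $\regset$. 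Unfolding the step through \cref{def:system} and \cref{fig:cmdpoolsem,fig:cmdsem}, the register-store effect is governed entirely by the primitive-command semantics $\ipcmd \gg \regstore \astep{\lab} \regstore'$ of \cref{fig:pcmdsem}, so it suffices to reason at that level.

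First I would treat the primitive part $\pcmd \gg \regstore \astep{\lab} \regstore^0$, obtaining the intermediate stores $\regstore_1^0$ and $\regstore_2^0$ reached after $\pcmd$ in the two runs, and show $\regstore_1^0 =_{\Reg\setminus\regset} \regstore_2^0$. Since $\pcmd$ mentions no register of $\regset$ and $\regstore_1 =_{\Reg\setminus\regset} \regstore_2$, a short case analysis suffices: for $\assignInst{\reg}{\exp}$ and for $\writeInst{\loc}{\exp}$ the evaluation of $\exp$ agrees across the two stores, and $\swapInst{\loc}{\exp}$ leaves the store untouched; for $\readInst{\reg}{\loc}$ the updated register $\reg \notin \regset$ receives the value carried by the label, which is the \emph{same} label $\lab$ in both transitions, so the two updates coincide. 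In every case $\regstore_1^0$ and $\regstore_2^0$ still agree on $\Reg\setminus\regset$.

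Next I would induct along the full assignment list $\assignInst{\reg_1}{\exp_1}, \ldots, \assignInst{\reg_n}{\exp_n}$ executed from $\regstore_2^0$, comparing it with the run of the retained sublist executed from $\regstore_1^0$. The invariant is that after the length-$i$ prefix of the full run, its store agrees on $\Reg\setminus\regset$ with the store of the retained run restricted to the retained indices $\leq i$. The base case $i=0$ is exactly the conclusion of the primitive part. For the step, an auxiliary assignment ($\reg_i \in \regset$) changes only $\reg_i$ in the full run and is skipped in the retained run, so the invariant persists; a retained assignment ($\reg_i \notin \regset$) evaluates a $\regset$-free expression $\exp_i$, which by the invariant yields the same value in both runs, and writes the same register $\reg_i \notin \regset$, so again the invariant persists. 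Taking $i=n$ gives $\regstore_1' =_{\Reg\setminus\regset} \regstore_2'$.

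The main obstacle is that the multi-assignment is executed \emph{sequentially} rather than in parallel (\cref{fig:pcmdsem}), so later expressions may read registers written by earlier assignments; a pointwise comparison of final values does not work, and one must thread the relation $=_{\Reg\setminus\regset}$ through the interleaving of kept and dropped assignments. The side-condition that makes $\rem{\ipcmd}{\regset}$ well defined, namely that no register of $\regset$ occurs in the retained expressions, is precisely what renders the kept assignments insensitive to the dropped (auxiliary) ones, and is the crux of the inductive step. The remaining bookkeeping (the shared label fixing read values, and auxiliary registers never appearing in $\pcmd$) is routine.
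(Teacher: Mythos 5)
Your proof is correct. Note, however, that the paper itself gives \emph{no} proof of \cref{prop:aux}: the proposition is stated bare in the appendix and immediately used in the soundness argument for the auxiliary-variables rule, so there is nothing to compare your argument against---you are supplying a step the authors left implicit. Your decomposition is the natural one and, I believe, the intended one: first show that the primitive command $\pcmd$ preserves agreement on $\Reg\setminus\regset$ (using that $\regset$-registers occur only in the instrumentation component, and that the shared label $\lab$ fixes the value written by a $\readInstn$), then induct along the sequentially executed assignment list with the invariant that the full run and the retained run agree on $\Reg\setminus\regset$ after corresponding prefixes. You correctly identify the crux, namely that the side condition making $\rem{\ipcmd}{\regset}$ well defined (no register of $\regset$ occurs in the retained expressions) is exactly what insulates the kept assignments from the dropped ones under sequential evaluation. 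One small point worth flagging: the paper's definition of removal enumerates the indices $j$ with $\reg_j \in \regset$, which is evidently a typo for $\reg_j \notin \regset$; your reading (keep the non-auxiliary assignments) is the one consistent with the rule \textsc{Aux} and with the rest of the development, and your proof silently and correctly adopts it.
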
 
 
 \begin{proof} Proof of the soundness of the rule of auxiliary variables.  \\
 \noindent  Let $\comp \in \Comp(\set{\tid \mapsto \cmd}) \cap \assume(\pre,\cR)$. We need to show that $\comp \in \commit(\cG, \post)$. Assume $\comp$ to be 
 \[ \tup{\{\tid \mapsto \cmd_0 \},\regstore_0,\memstate_0} \trans {a_1} \tup{\{\tid \mapsto \cmd_1 \} ,\regstore_1,\memstate_1} \trans {a_2} \ldots \trans {} \ldots  \] 
with $\cmd_0 = \cmd$.  Let $\regset \subseteq \Reg$ be a set of auxiliary registers and $\cmd'$ be an arbitrary command with $\rem{\cmd'}{\regset} = \cmd$ and $\set{\tid \mapsto \cmd'} \satm (\pre',\cR',\cG,\post)$. By the lifting of Proposition~\ref{prop:aux} to arbitrary commands, there exists a computation $\comp' \in \Comp(\set{\tid \mapsto \cmd'})$ with  
\[ \tup{\{\tid \mapsto \cmd_0' \},\regstore_0',\memstate_0} \trans {a_1} \tup{\{\tid \mapsto \cmd_1' \} ,\regstore_1',\memstate_1} \trans {a_2} \ldots \trans {} \ldots  \] 
such that $\cmd_0' = \cmd'$, $\rem{\cmd_i'}{\regset} = \cmd_i$ for all $i \geq 0$ and $\regstore_i =_{\Reg \setminus \regset} \regstore_i'$ for all $i \geq 0$.  By $\ind{\pre}{\regset}$, $\regstore_0' \in \pre$. We choose $\regstore_0'$ to be a register store satisfying $\pre'$ (which exists by condition $\forall \astate \in \pre \ldotp \exists \astate' \in \pre' \ldotp \astate =_{\Reg \setminus \regset} \astate'$). 
For the environment steps see below. 
We get $\comp' \in \assume(\pre',\cR)$ (using $\ind{\cR}{\regset}$).  \\
By $\comp'$ admitting $\cR$, we know that for all $R \in \cR$, $\tup{\regstore_i,\memstate_i} \trans {a_{i+1}} \tup{\regstore_{i+1},\memstate_{i+1}}$ with $a_{i+1} = \envstep$, the condition $\tup{\regstore_i,\memstate_i} \in R \implies \tup{\regstore_{i+1},\memstate_{i+1}} \in R$ holds. Now take arbitrary $R' \in \cR'$ and let $\tup{\regstore_i',\memstate_i} \in R'$. By the condition on relies,  we know the existence of $\tup{\regstore_2,\memstate_{i+1}}$ with 
$\tup{\regstore_{i+1},\memstate_{i+1}} =_{\Reg \setminus \regset} \tup{\regstore_2,\memstate_{i+1}}$ and $\tup{\regstore_2,\memstate_{i+1}} \in R'$. We take this to be $\tup{\regstore_{i+1}',\memstate_{i+1}}$. 
Hence $\comp' \in \assume(\pre',\cR')$. By $\set{\tid \mapsto \cmd'} \satm (\pre', \cR', \cG, \post)$ we get $\comp' \in \commit(\cG,\post)$. By $\ind{\cG}{\regset}$ and $\ind{\post}{\regset}$ we get $\comp \in \commit(\cG, \post)$. 
\end{proof}


\section{More Examples}

\subsection{Load-buffering (LB)} 

We present a proof outline for LB. 

\begin{center} 
  \begin{tabular}[b]{@{}c@{}} 
   $\assert{  \creg{a}=0 \land \creg{b}=0   } $ \\ 
  $\begin{array}{l@{\ \ }||@{\ \ }l}
    \begin{array}[t]{l}
     \textbf{Thread } 1 
     \\
     \assert{ \potassert{\ctid{2}}{[\cloc{y}= 0] \land \creg{b}=0  }} \\ 
     1: \readInst{\creg{a}}{\cloc{x}}; \\ 
     \assert{ \potassert{\ctid{2}}{ [\cloc{y} = 0] \land \creg{b}=0} }\\ 
     2: \writeInst{\cloc{y}}{1}; \\
     \assert{ \creg{a}=0 \vee \creg{b}=0 } 
     \end{array}
    & 
    \begin{array}[t]{l}
     \textbf{Thread } 2 
     \\
     \assert{  \potassert{\ctid{1}} {[\cloc{x} = 0] \land \creg{a}=0 }} \\
     3: \readInst{\creg{b}}{\cloc{y}} ;
     \\
      \assert{ \potassert{\ctid{1}}{ [\cloc{x} = 0] \land \creg{a}=0 }} \\ 
     4: \writeInst{\cloc{x}}{1}; \\ 
     \assert{ \creg{a}=0 \vee \creg{b}=0 }
     \end{array}
     \end{array}$ \\
      $\assert{ \creg{a}=0 \vee \creg{b}=0 }  $
   \end{tabular} 
\end{center} 

\subsection{Two writers on two locations (2+2W)} 

We present a proof outline for 2+2W. 

 \begin{center} 
  \begin{tabular}[b]{@{}c@{}} 
   $\assert{\creg{a}=\creg{b}=\hat{\creg{c}} = 0} $ \\
  $\begin{array}{l@{\ \ }||@{\ \ }l}
    \begin{array}[t]{l}
     \textbf{Thread } \ctid{1} 
     \\
     \assert{\creg{a}=0} \\  
     1: \writeInst{\cloc{x}}{1}; \\ 
     \assert{  \true }     \\ 
     2: \writeInst{\cloc{y}}{2}; \\
      \assert{\potassert{\ctid{1}}{[ \cloc{y}=2]} \lor {} \\
              \left(\inarr{(\hat{\creg{c}}=1 \implies \potassert{\ctid{2}}{[ \cloc{x}=2]}) \land \\
     (\creg{b} \neq 0 \implies \creg{b}=2)}\right)} \\
     3: \readInst{\creg{a}}{\cloc{y}}; \\ 
     \assert{  \true }      \\
     \end{array}
    & 
    \begin{array}[t]{l}
     \textbf{Thread } \ctid{2} 
     \\
     \assert{\creg{b}=\hat{\creg{c}}=0} \\  
     4: \writeInst{\cloc{y}}{1}; \\ 
     \assert{  \true }     \\ 
     5: \tup{\writeInst{\cloc{x}}{2} , \assignInst{\hat{\creg{c}}}{1}} ; \\
     \assert{\potassert{\ctid{2}}{[ \cloc{x}\neq 0]} \land {} \\
              (\creg{a}=1 \implies \potassert{\ctid{2}}{[ \cloc{x}=2]}) \land \hat{\creg{c}} = 1} \\
     6: \readInst{\creg{b}}{\cloc{x}}; \\ 
     \assert{\creg{b} \neq 0 \land (\creg{a}=1 \implies \creg{b}=2)}
     \end{array}
     \end{array}$ \\
       $\assert{\creg{a}=1 \implies \creg{b}=2} $ 
   \end{tabular} 
 \end{center}
 
\subsection{Store-buffering with fences (SB)} 

SC-fences are modeled as RMWs to an otherwise unused location $\cloc{f}$.
The auxiliary variable $\hat{\creg{c}}$ is used to remember the order of the fence instructions.

 \begin{center} 
  \begin{tabular}[b]{@{}c@{}} 
   $\assert{\hat{\creg{c}}=0}$\\
  $\begin{array}{l@{\ \ }||@{\ \ }l}
    \begin{array}[t]{l}
     \textbf{Thread } \ctid{1} 
     \\
     \assert{\hat{\creg{c}}=0 \lor (\hat{\creg{c}}=2 \land \potassert{\ctid{1}}{[\lR(\cloc{f})]\chop[\cloc{y}=1]})} \\  
     1: \writeInst{\cloc{x}}{1}; \\ 
     \assert{\inarr{\potassert{\ctid{1}}[\cloc{x}=1] \land {}\\
      \inpar{\hat{\creg{c}}=0 \lor {} \\
             (\hat{\creg{c}}=2 \land \potassert{\ctid{1}}{[\lR(\cloc{f})]\chop[\cloc{y}=1]})}}}     \\ 
     2: \tup{\swapInst{\cloc{f}}{0} , \assignInst{\hat{\creg{c}}}{10 \hat{\creg{c}}+1}} ; \\
     \assert{\hat{\creg{c}}=1 \lor \hat{\creg{c}}=12 \lor {} \\
             (\hat{\creg{c}}=21 \land \potassert{\ctid{1}}{[\cloc{y}=1]})} \\
     3: \readInst{\creg{a}}{\cloc{y}}; \\ 
     \assert{\hat{\creg{c}}=1 \lor \hat{\creg{c}}=12 \lor (\hat{\creg{c}}=21 \land \creg{a}=1)}      \\
     \end{array}
    & 
    \begin{array}[t]{l}
     \textbf{Thread } \ctid{2} 
     \\
     \assert{\hat{\creg{c}}=0 \lor {} \\
             (\hat{\creg{c}}=1 \land \potassert{\ctid{2}}{[\lR(\cloc{f})]\chop[\cloc{x}=1]})} \\  
     4: \writeInst{\cloc{y}}{1}; \\ 
     \assert{\inarr{\potassert{\ctid{2}}[\cloc{y}=1] \land {}\\
     \inpar{\hat{\creg{c}}=0 \lor {}  \\
            (\hat{\creg{c}}=1 \land \potassert{\ctid{2}}{[\lR(\cloc{f})]\chop[\cloc{x}=1]})}}}    \\ 
     5: \tup{\swapInst{\cloc{f}}{0} , \assignInst{\hat{\creg{c}}}{10 \hat{\creg{c}}+2}} ; \\
     \assert{\hat{\creg{c}}=2 \lor \hat{\creg{c}}=21 \lor {} \\
              (\hat{\creg{c}}=12 \land \potassert{\ctid{2}}{[\cloc{x}=1]})} \\
     6: \readInst{\creg{b}}{\cloc{x}}; \\ 
     \assert{\hat{\creg{c}}=2 \lor \hat{\creg{c}}=21 \lor {} \\
             (\hat{\creg{c}}=12 \land \creg{b}=1)}
     \end{array}
     \end{array}$ \\
       $\assert{\creg{a}=1 \lor \creg{b}=1} $ 
   \end{tabular} 
 \end{center}


\end{document}